\pgfplotsset{compat=1.18}
\title{Fixed-budget and Multiple-issue Quadratic Voting}
\author {
    Laura Georgescu\textsuperscript{\rm 1},
    James Fox\textsuperscript{\rm 1},
    Anna Gautier\textsuperscript{\rm 2},
    Michael Wooldridge\textsuperscript{\rm 1}
}
\begin{document}
\maketitle
\begin{abstract}
\emph{Quadratic Voting} (QV) is a social choice mechanism that addresses the \say{tyranny of the majority} of one-person-one-vote mechanisms. Agents express not only their preference ordering but also their preference intensity by purchasing $x$ votes at a cost of $x^2$. Although this pricing rule maximizes utilitarian social welfare and is robust against strategic manipulation, it has not yet found many real-life applications. One key reason is that the original QV mechanism does not limit voter budgets. 
Two variations have since been proposed: a (no-budget) multiple-issue generalization and a fixed-budget version that allocates a constant number of \say{credits} to agents for use in multiple binary elections. While some analysis has been undertaken with respect to the multiple-issue variation, the fixed-budget version has not yet been rigorously studied.
In this work, we formally propose a novel fixed-budget multiple-issue QV mechanism. This integrates the advantages of both the aforementioned variations, laying the theoretical foundations for practical use cases of QV, such as multi-agent resource allocation. We analyse our fixed-budget multiple-issue QV by comparing it with traditional voting systems, exploring potential
collusion strategies, and showing that checking whether strategy profiles form a Nash equilibrium is tractable.
\end{abstract}

%

\section{Introduction}

In an attempt to alleviate some of the pitfalls prevalent in the widely used one-person-one-vote (1p1v) voting system, economists, philosophers and mathematicians have proposed a variety of alternative electoral mechanisms. In \emph{On Liberty} \cite{on_liberty}, John Stuart Mill investigated decision-making from the stance of utilitarianism and argued that democratic ideals might lead to the oppression of minorities, referred to as the \say{tyranny of the majority.} Inspired by existing market mechanisms for allocating private goods, \citet{qv2018} introduced \emph{Quadratic Voting} (QV) as an alternative collective decision-making mechanism for public goods that, in contrast to 1p1v, allows potentially Pareto improving trades (that is, benefiting all involved parties), while still being a form of direct democracy. The key idea of QV is that for every additional unit of influence, agents pay a price linear to the magnitude of total impact. Therefore, the overall price paid is quadratic in the total number of votes and thus electoral power per monetary unit is inversely proportional to the total influence.

\vspace*{1.5ex}\noindent\textbf{Related work:}
\citet{qv2018} showed that, assuming that agents agree on the probability that an extra vote will change the outcome, a pricing rule is optimal (i.e., in the equilibrium, the outcome coincides with the optimal utilitarian social welfare choice) if and only if it is quadratic. \citet{robustness} showed that the mechanism is somewhat robust against certain forms of collusion, fraud, and aggregate uncertainty. The original QV proposal focused on binary decisions where payments are real money, but \citet{multiplealternative} extended QV to multiple-issue elections and \citet{quad_election_law} tried to eliminate the influence of wealth by allocating the same fixed number (or budget) of \say{credits} to agents for them to divide between several binary elections.

QV has been examined from the ethical~\cite{ethical_condierations}, practical \cite{publicgood, inthewild} and legal \cite{quad_election_law} perspectives, with applications including corporate \cite{coportategovernance} and blockchain \cite{blockchain} governance, civic engagement \cite{bassetti2023civicbase}, and survey research \cite{inthewild}. Still, there lacks a rigorous comparison of QV against other voting mechanisms, and a formalization of a fixed-budget multiple-issue variation is absent, even though this is arguably the most likely real-world implementation.

\vspace*{1.5ex}\noindent\textbf{Our Contributions:}
Our original contributions are threefold. First, in \cref{sec:fixed_budget}, we propose the first formalization of fixed-budget multiple-issue QV, which inherits the advantages of both no-budget multiple-issue QV \cite{multiplealternative} and fixed-budget QV (described informally in~\cite{quad_election_law}). Second, in \cref{sec:properties}, we rigorously compare QV with other more established voting methods and explore novel collusion strategies. Finally, in \cref{sec:complexity}, we prove that checking whether a given strategy profile is a pure Nash equilibrium (NE) for QV can be done in polynomial time for both fixed-budget and no-budget multiple-issue QV.

\section{Preliminaries}
In this section, we formally introduce previously defined variants of QV and summarise existing known results. \Cref{tbl:variants} categorizes the QV versions based on the number of outcomes and on whether there is a credit-imposed spending limit for purchasing the votes.

\begin{table}[t]
	\centering
	\caption{Variants of QV}
	\begin{threeparttable}
	\begin{tabularx}{\linewidth}{ c X X } 
		\toprule
		& No budget & Fixed budget \newline  (credits) \\ 
		\midrule
		\multirow{2}{4em}{Binary decision} & QV \newline \citep{qv2018}  & Fixed-budget QV \newline \cite{quad_election_law} \\
		\addlinespace
		\multirow{2}{4em}{Multiple outcomes} & Multiple-issue QV \cite{multiplealternative}  & Fixed-budget Multiple-issue QV\tnote{1} \\ 
		\bottomrule%
	\end{tabularx}
	\begin{tablenotes}
		\item[1] Our contribution, formalized in \cref{sec:fixed_budget}.
	\end{tablenotes}
	\end{threeparttable}
	\label{tbl:variants}
\end{table}%

First, we introduce some notation. The \textit{utility} $\util{i}{\omega} \in \R$ captures the preference of
agent $i$ towards outcome $\omega$; $\Sigma$ is the set of
available votes (strategies) for every agent, and a \textit{strategy
	profile} $\strtprofile \in \Sigma^{|N|}$ denotes a tuple $(\allvotes{1}, \dots, \allvotes{|N|})$ containing a strategy for each of the agents in the society $N$, respectively. $U^i(\strtprofile)$
is also a utility, which represents the preference of agent $i$
towards the result of the votes $\strtprofile$.  The
\textit{(utilitarian) social welfare} is
simply the sum of every agent's utility $\sum_i U^i(\strtprofile)$. Agent $i$'s strategy $\sigma \in \Sigma$ is a \textit{best response} to the partial strategy profile $\votes{-i}{}$ assigning strategies to the other agents if $U^i(\votes{-i}{}, \sigma) \geq U^i(\votes{-i}{}, \sigma'), \forall \sigma' \in \Sigma$. A strategy profile $\strtprofile$ is a \textit{(pure) Nash equilibrium} (NE) if no agent has an incentive to unilaterally change their strategy, meaning that every agent is simultaneously playing a best response: $ U^i(\strtprofile) \geq U^i(\votes{-i}{}, \sigma), \forall i \in N, \sigma \in \Sigma$.



\newcommand{\sgn}[0]{\text{sgn}}
\newcommand{\refundterm}[0]{\frac{\alpha}{N - 1} \sum_{j \neq i} (\allvotes{j})^2}



\subsection{No-budget binary and multiple-issue QV}
\label{prelim:multipleqv}
\citet{qv2018} introduced binary QV as a mechanism to avoid the \say{tyranny of the majority} (disregard towards minorities' preferences) in a binary collective decision, where a society of agents $N$ has to decide whether or not to adopt a motion~$\mathcal{M}$. \emph{Multiple-issue QV} \cite{multiplealternative} generalizes binary QV by allowing agents to decide between an arbitrary number of issues (or outcomes) $\Omega$.

In both systems, for a system-defined constant $\alpha > 0$, each agent $i$ incurs a price of $\alpha (\votes{i}{\omega})^2$ for every $\votes{i}{\omega}$ votes placed on issue $\omega$. The number of votes can be either positive, representing support, or negative for opposition. For example, if there are three issues and an agent casts $4$ votes for the first, $-3$ for the second, and $1$ for the last, it costs them $\alpha(4^2+3^2+1^2)=26\alpha$.  Each agent can cast unlimited votes, and the winning outcome is selected uniformly at random from $W$, the set of outcomes with the highest total vote count $\totalvotes{\omega}$, or, in the binary paradigm, agents cast positive or negative votes on a single motion, which will be adopted if the total number of votes is non-negative (i.e., if $\totalvotes{\mathcal{M}} = \sum_{i \in N} \allvotes{i} \geq 0$). The voting mechanism is budget-balanced, meaning that the profits made from one agent are redistributed to the other agents, each one receiving $\alpha (\votes{i}{\omega})^2/(|N|-1)$ from agent $i$. We now formalize this intuition:

\begin{definition}[No-budget multiple-issue QV] 
	\label{def:nobudgetQV}
	A no-budget multiple-issue QV election mechanism has the following components:
	\begin{enumerate}
		\item $N$ is the set of agents voting in the election.
		\item $\Omega$ is the finite set of outcomes.
		\item $u \in \R^{|N| \times |\Omega|}$ is the utility matrix, where $\util{i}{\omega}$ is the utility outcome $\omega$ brings to agent $i$ if $\omega$ is elected as a winner. The vectors $u^1,\dots, u^{|N|}$ are i.i.d. In multiple-issue QV, it is assumed that $u$ is common knowledge among the society $N$, whereas in binary QV the agents just know that their utility values are drawn from the same distribution.
		
		\item The strategy of agent $i$ is denoted by $\allvotes{i} \in \Sigma = \Z^{|\Omega|}$, where $\votes{i}{\omega}$ represents the (integer) number of votes agent $i$ puts for outcome $\omega$.
		$\strtprofile \in \Z^{|N| \times |\Omega|}$ is a strategy profile. 
		\item $\alpha \in \R_{++}$ is the constant that controls the magnitude of the payment. For casting ballot $\allvotes{i}$, agent $i$ pays a total price of $\alpha \sum_{\omega} (\votes{i}{\omega})^2$ and is refunded the mean of the prices paid by the other agents $\frac{\alpha}{|N| - 1} \sum_{j \neq i} \sum_{\omega \in \Omega} (\votes{j}{\omega})^2$.
		
		\item The total number of votes for outcome $\omega$ is $\totalvotes{\omega} = \sum_{i \in N} \votes{i}{\omega}$. 
		
		\item The set of possible winning outcomes, $W$, is the set the outcomes with the highest support $W = \argmax_{\omega \in \Omega} \totalvotes{\omega}$. The winning candidate is selected uniformly at random from $W$, implying that the probability outcome $\omega \in \Omega$ is elected is: 
		$$\P{\omega} =
		\begin{cases}
			1/|W|, & \text{ if } \omega \in W \\
			0, & \text{ otherwise}.
		\end{cases}$$
		
		\item The (total) utility $U^i(\strtprofile)$ for agent $i \in N$, in terms of the strategy profile $\strtprofile$, is their expected utility minus the overall monetary deficit. Formally, $U^i(\strtprofile) ={}$
		\begin{equation*} \small
			\label{eq:indiv_util}
			\underbrace{\sum_{\omega \in \Omega} \util{i}{\omega}\P{\omega} }_{\text{expected outcome}}%
			- \underbrace{\alpha \sum_{\omega \in \Omega} (\votes{i}{\omega})^2}_{\text{payment}}%
			+ \underbrace{\frac{\alpha}{N - 1} \sum_{j \neq i} \sum_{\omega \in \Omega} (\votes{j}{\omega})^2}_{\text{refund}}.
		\end{equation*}
	\end{enumerate}
	
	Note that when the agent is optimizing for outcome utility, the refund term in $U^i(\strtprofile)$ can be ignored as it is independent from $i$'s strategy. A no-budget binary QV election is a special case of this mechanism, where the motion $\mathcal{M}$ is adopted if and only if $\totalvotes{\mathcal{M}} \geq 0$ and since no probabilities are involved in deciding the outcome of the election, the outcome term in (8)'s total utility is replaced by $u^i \sgn(\totalvotes{\mathcal{M}})$. Also, note that agents only express one number of votes in this case, positive for the motion, negative means against.
\end{definition}
%
%
\begin{example}
	Given the ballots in \cref{tbl:multipleqv} (top), then outcome $\omega_1$ gets $\totalvotes{\omega_1}=6-4+1=3$ votes whilst $\omega_2$ and $\omega_3$ get  $\totalvotes{\omega_2}=3$ and $\totalvotes{\omega_3}=-2$ votes, respectively. Both $\omega_1$ and $\omega_2$ tie for the winning position, so $W = \{\omega_1, \omega_2\}$ and $\P{\omega_1} = \P{\omega_2} = 0.5$. For $\alpha=1$, the prices incurred by every agent are presented in \cref{tbl:multipleqv} (bottom).
	
	\begin{table}
		\centering
		\caption{Multiple-issue QV example. Purchased votes $\strtprofile$ (top) and cash flow for $\strtprofile$ (bottom). }
		\label{tbl:multipleqv}
		\begin{subtable}[b]{0.3\textwidth}
			\centering
			\begin{tabular}{c c c c}
				\toprule
				$\Omega$ & A & B & C \\
				\midrule
				$\omega_1$ & 6 & -4 & 1 \\
				$\omega_2$ & -3 & 5 & 1 \\
				$\omega_3$ & 1 & -10 & 7\\
				\bottomrule
			\end{tabular}
		\end{subtable}\quad
		\begin{subtable}[b]{0.5\textwidth}
			\centering
			\begin{tabular}{c c c c}
				\toprule
				& A & B & C \\
				\midrule
				\multirow{2}{*}{payment} & {\footnotesize$6^2+3^2+1^2$} & {\footnotesize$4^2+5^2+10^2$} & {\footnotesize$1^2+1^2+7^2$} \\
				& 46 & 141 & 51 \\
				\addlinespace
				refund & $\vphantom{\frac{\mathstrut 1}{\mathstrut 2}} \frac{141+51}{2}$ & $\frac{46+51}{2}$ & $\frac{46+141}{2}$\\
				\bottomrule
			\end{tabular}
		\end{subtable}
	\end{table}
\end{example}

\subsection{Fixed-budget QV}

To investigate the viability of introducing QV as a political voting system, \citet{quad_election_law} suggest fixed-budget QV (which they call \textit{mQV}) to overcome the problematic use of real money. Fixed-budget QV allocates all agents the same number of credits or tokens $B$, where every credit replaces $1$ monetary unit in the traditional no-budget QV. After every agent receives their budget of $B$ tokens, they must decide how to split them across a number of binary referenda, which may happen once every $5$ years. Observe that when $B=1$ and there is one election, fixed-budget QV is equivalent to 1p1v where every agent picks which election they are interested in (with the extension that one can also place $-1$ votes).

Hence, fixed-budget QV aims to eliminate the influence of wealth in elections and allows minorities to have more proportional power by being more resistant towards the \say{tyranny of the majority}. Both no-budget QV and the fixed-budget variation use the same quadratic payment rule (with money or credits, respectively) and the same system for determining the winning outcome, with the goal of maximising social welfare. However, the existing literature presents no formalization or rigorous analysis of this variant.

\section{Fixed-budget multiple-issue QV}
\label{sec:fixed_budget}

We now formally introduce \emph{fixed-budget multiple-issue QV}, for a single, multiple-issue, election, which combines the advantages of both multiple-issue QV \cite{multiplealternative} and fixed-budget QV \cite{quad_election_law}. We believe that this variation presents the most practical option for real-world usage.



\begin{definition}[Fixed-budget multiple-issue QV]
	\label{def:fixedbudgetQV}
	A fixed-budget multiple-issue QV election mechanism is formalised with (1)-(7) the same as in \cref{def:nobudgetQV} and:
	\begin{enumerate}
		\item A budget $B \in \mathbb{N}$, for each agent to spend on the election.
		\item Each agent $i \in N$ spends at most $B$ credits, i.e., 
		\begin{equation*}
			\sum_{\omega \in \Omega}(\votes{i}{\omega})^2 \leq B, \quad \forall i \in N.
		\end{equation*}
		\item Agent $i$'s total utility is the expected outcome utility, i.e., \\ $U^i(\strtprofile) = \sum_{\omega \in \Omega} \util{i}{\omega}\P{\omega}$. The payment and refund terms disappeared from (8) in \cref{def:nobudgetQV} as all agents receive the same number of credits. Note that the payment scalar $\alpha$ is not needed, as it can be incorporated in $B$.

		
	\end{enumerate}

\end{definition}

	
	

First, we wish to ask how the properties of fixed-budget multiple-issue QV compare with the no-budget case. On the one hand, \citet{multiplealternative} argued in their no-budget multiple-issue QV analysis that as the population size grows, the agents' monetary involvement diminishes, and so, however low the budget $B > 0$, a sufficiently large society can rely on the participation of all agents. This implies that the no-budget multiple-issue QV results will asymptotically transfer to the fixed budget case.
On the other hand, even though no-budget multiple-issue QV is safe from the tyranny of the majority, the fixed budget eliminates the influence of personal capital on the election. This makes it even more resistant towards the tyranny of the majority. We now prove that other important properties of multiple-issue QV remain the same with or without budget.



\newcommand{\titleofProp}[0]{Properties: QV vs other voting systems}
\subsection{\titleofProp} 
\label{sec:properties}
QV claims to offer many advantages as a voting system \cite{qv2018, robustness, ethical_condierations,quad_election_law}, but it has not yet been rigorously compared against other more established voting systems. So, in this section, we compare both fixed-budget and no-budget multiple-issue QV with four popular alternatives.
Meaningful properties of voting mechanisms relate to characteristics of the elected candidate, how the winner changes in related circumstances, and whether voters may have an incentive for strategic manipulation. In particular, we evaluate QV with respect to:

\begin{description}
	\item[Intensity:] A voting system satisfies this if an agent can express the intensity of their preferences, not just the ordering.
	\item[Majority safe:] A voting system satisfies this when a candidate that is ranked first by at least half of the agents is not guaranteed to win the election.
	\item[Consistency:] A voting method is \textit{consistent} if the election comprised of the sum of several elections all giving outcome $\omega$ also gives the same winning outcome.
	\item[Clone independence:]
	The winner of an election must not change after a non-winning clone $\gamma$ of an existing candidate $\omega$, was introduced to the ballot. By clone we refer to an outcome which is very similarly preferred as $\omega$ by all agents. That is, the utility of the clone is $\util{i}{\gamma} = \util{i}{\omega} - \epsilon, \forall i$ for a very small $\epsilon > 0$. We assume that strategic manipulation (i.e., changing the votes for the other outcomes) is not allowed, as in \citet{cloneproof, range_ICC_AFB}.
	\item[Independent of irrelevant
	alternatives (IIA):] A voting system satisfies this if from the set of issues $\{\omega,\phi\}$ the
	overall preferred outcome is $\omega$, then, when irrelevant alternative
	$\psi$ is introduced and only votes for $\psi$ change, $\omega$ must
	still be preferred over $\phi$.
	\item[No favourite betrayal (NFB):] A decision mechanism passes the NFB criterion if there is no incentive for an agent to rate another outcome strictly ahead of their sincere favorite, in order to make a more preferred outcome win.
\end{description}
\cref{th:properties} summarizes results for QV for all five criteria. The proof can be found in \cref{appendix:properties}.

\begin{restatable}{theorem}{properties}
	Multiple issue QV (fixed budget or not) is safe from the tyranny of the majority, it is consistent, clone-independent, IIA, but it does not satisfy NFB.
	\label{th:properties}
\end{restatable}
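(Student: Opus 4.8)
The plan is to handle the four affirmative properties and the failure of NFB separately, leaning throughout on a single structural fact common to both variants: the winner is drawn uniformly from $W = \argmax_{\omega} \totalvotes{\omega}$, where the total $\totalvotes{\omega}$ is \emph{additive} across agents and is computed \emph{independently} for each outcome. The budget constraint only shrinks the feasible ballot set $\Sigma$; it never alters the aggregation rule. Hence any argument that appeals solely to the argmax-of-totals rule automatically covers the fixed-budget and no-budget cases at once, and I would state this observation first and reuse it.

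For \textbf{consistency} and \textbf{IIA} I would argue directly from additivity. For consistency, if two component elections each place $\omega$ in the winning set, then in each the total for $\omega$ weakly dominates that of every other outcome; since totals add when the electorates are combined, $\omega$'s combined total stays maximal, so $\omega$ remains a winner. For IIA, the hypothesis that ``only votes for $\psi$ change'' leaves $\totalvotes{\omega}$ and $\totalvotes{\phi}$ untouched, so their relative order, and therefore the collectively preferred outcome between $\omega$ and $\phi$, is preserved. \textbf{Majority safe} is an existence claim (the majority favourite is merely \emph{not guaranteed} to win), so one counterexample suffices: a lukewarm majority casting a single vote each for $\omega$ against a more intense minority that concentrates its votes on $\phi$, driving $\totalvotes{\phi} > \totalvotes{\omega}$; with the budget $B$ taken large enough this also goes through in the fixed-budget model, and it is precisely the effect QV is designed to produce.

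For \textbf{clone independence} the decisive point is that, unlike plurality, QV treats $\omega$ and its clone $\gamma$ as separate issues, so introducing $\gamma$ does not \emph{split} any agent's support for $\omega$. Under the stated no-manipulation assumption the votes on every original outcome are frozen, hence each original $\totalvotes{\omega}$ is unchanged, and since $\gamma$ is assumed non-winning, the argmax over all outcomes still returns the original winner. I would flag that the only delicate reading is the fixed-budget one (placing votes on $\gamma$ consumes credits), but because the no-manipulation hypothesis fixes the original ballots by fiat, the totals argument is unaffected.

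The substantive work is the \textbf{failure of NFB}, which I would establish by an explicit counterexample exhibiting a strict gain from betrayal. Take three outcomes and one pivotal agent $i$ with $\util{i}{\omega_1} > \util{i}{\omega_2} \gg \util{i}{\omega_3}$, where the other agents have already put $\omega_1$ out of contention while leaving $\omega_2$ and $\omega_3$ in an exact tie that a single vote breaks. Voting sincerely for the favourite $\omega_1$ leaves the tie intact and yields expected utility $\tfrac12(\util{i}{\omega_2} + \util{i}{\omega_3})$, whereas shifting that vote onto $\omega_2$, i.e.\ ranking $\omega_2$ ahead of the sincere favourite, makes $\omega_2$ win outright for utility $\util{i}{\omega_2}$, a strict improvement. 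I would present the instance first in the fixed-budget model with $B = 1$ (where the trade-off is forced and the integer arithmetic is cleanest) and then note that the same pivotal configuration yields the identical incentive without a budget. The main obstacle is getting this example exactly right: betrayal must be \emph{strictly} profitable, the favourite must be genuinely hopeless so that a sincere vote is wasted, and the ballots must respect integrality (and the credit constraint in the budget case), so I would verify the vote counts and the expected-utility comparison by hand rather than defer to the structural argument used for the other four criteria.
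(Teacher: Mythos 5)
Your treatment of the four affirmative criteria matches the paper's: consistency and IIA follow from additivity of the per-outcome totals, majority-safety from a lukewarm-majority-versus-intense-minority counterexample (the paper uses two agents casting $1$ vote each against one agent casting $3$), and clone-independence from the no-manipulation assumption freezing all original ballots so that every original total, and hence the argmax, is unchanged. Your opening observation that the budget only restricts $\Sigma$ and never touches the aggregation rule is a clean way to cover both variants at once.

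The gap is in your NFB counterexample. Because negative votes are allowed, your pivotal agent does not need to betray: instead of casting $+1$ on $\omega_2$, they can cast $-1$ on $\omega_3$. This costs the same single credit, breaks the $\omega_2$/$\omega_3$ tie in favour of $\omega_2$ just as well, and leaves $\omega_2$ tied with the favourite $\omega_1$ at $0$ votes on the agent's ballot --- so no outcome is rated \emph{strictly} ahead of the favourite. The best non-betraying strategy therefore achieves exactly the same utility as the best betraying one, and there is no strict incentive to betray; your instance does not witness a violation of NFB. You correctly identified that the example ``must be exactly right'' and that betrayal must be \emph{strictly} profitable, but the check you propose (sincere vote on $\omega_1$ versus shifted vote on $\omega_2$) compares the wrong pair of strategies: NFB is violated only if the best betraying ballot beats \emph{every} non-betraying ballot. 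The paper's construction avoids this trap by placing the favourite $\psi$ at $-100$ votes while the contest is between $\omega$ (at $10$) and $\phi$ (at $7$): any ballot that lifts $\phi$ above $\omega$ without betraying must put at least as many votes on $\psi$ as on $\phi$, and since $\psi$ is hopeless those votes are pure wasted expenditure, so the betraying ballot is strictly cheaper and yields strictly higher utility. To repair your example you need the same structural feature --- a favourite that is both unwinnable and expensive to ``not betray'' --- rather than a favourite that can be weakly top-ranked for free.
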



We compare QV against the following voting systems whose properties according to the criteria above are known \cite{axiomatic_char, cloneproof, 1p1vFB, iia1}:

\begin{description}
	\item[1p1v] \cite{intro_to_vote_counting} (otherwise known as first-past-the-post or plurality) is the most widely-used voting system, in which agents are allowed to vote for at most one candidate, and the candidate with the most number of votes wins.
	
	\item[Borda] \cite{intro_to_vote_counting}. This requires agents to rank all the candidates and for each ballot, the first option gets $|\Omega|-1$ points, the second gets $|\Omega| - 2$, $\dots$, until the last option gets $0$. The winner is the outcome with the highest sum of such points.
	
	\item[Approval] \cite{intro_to_vote_counting}, which has agents indicate support or not for each of the candidates. The winner is the outcome supported by the most number of agents.
	
	\item[Score voting]  (or range voting),  is similar to approval voting, but agents are allowed to express their support for each candidate on a scale from $0$ to $k$ points, where $k$ is usually $5$ or $10$. The candidate with the most points wins\cite{range_ICC_AFB}.
\end{description}


QV is similar to score voting, in the sense that agents can express their preference intensity by casting more or fewer votes. The difference comes from the levels of support that can be shown, which is quite strict for score voting and much more diverse for QV. Additionally, score voting uses \say{free} votes, whereas in QV agents must purchase votes using money or credits, therefore inhibiting agents from expressing extreme preferences.
%
%
\newcommand{\cmark}{\textcolor{teal}{\ding{51}}}%
\newcommand{\xmark}{\textcolor{red}{\ding{55}}}%
\newcommand{\mmark}{\textcolor{olive}{\ding{70}}}
%
%
\begin{table*}[t]
	\small
	\centering
	\caption{Comparison of QV against other voting systems.\\ \cmark\hspace{0.5ex}means satisfying the property;  \xmark\hspace{0.5ex}means not satisfying the property; \mmark\hspace{0.5ex}represents limited satisfaction, see the footnote.}
	\begin{threeparttable}
	\begin{tabularx}{\linewidth}{c c X X X X}
		\toprule
		& QV & Approval & Score & 1p1v & Borda \\
		\midrule
		Intensity & \cmark & \xmark & \mmark\tnote{2} & \xmark & \xmark \\
		Majority safe & \cmark & \mmark\tnote{3} & \cmark\tnote{4} & \xmark \cite{majority} & \cmark\tnote{4} \\
		Consistency & \cmark & \cmark \cite{axiomatic_char} & \cmark \cite{axiomatic_char} & \cmark \cite{axiomatic_char} & \cmark \cite{axiomatic_char} \\
		Clone indep & \cmark & \cmark \cite{cloneproof} & \cmark \cite{range_ICC_AFB} & \xmark \cite{cloneproof} & \xmark \cite{cloneproof}\\
		IIA & \cmark \tnote{5} & \cmark\tnote{5} \cite{iia1} & \cmark\tnote{5} \cite{iia1} & \xmark \cite{iia1} & \xmark \cite{iia1} \\
		NFB & \xmark & \cmark \cite{fav_bet} & \cmark \cite{range_ICC_AFB} & \xmark \cite{1p1vFB} & \xmark \cite{1p1vFB} \\ 
		\bottomrule
	\end{tabularx}
	\begin{tablenotes}
		\item[2]Limited to the maximum score on the ballot. Usually, the maximum score $k$ does not allow for high granularity. It is common that $k=5$ or $k=10$.
		\item[3]Depends on how we interpret preference; \cite{the_case_for_AV} contains a discussion about the different models and how they affect the criterion.
		\item[4] Due to expression of intensity.\label{no_majority}
		\item[5] Assuming that agents evaluate issues independently, using their own scale. \label{indiv_IIA}
	\end{tablenotes}
	\end{threeparttable}
	\label{tbl:big_table}
\end{table*}

We summarise existing results on the aforementioned voting mechanisms against our results for QV in \Cref{tbl:big_table}. There are several other known advantages of QV compared with other voting mechanisms. First, under the assumption that the utilities of the agents $u^1, u^2, \dots, u^{|N|}$ are independent and identically distributed,  it is shown in \citet{robustness} that QV is the simplest mechanism  with outcomes that optimize social welfare. Additionally, \citet{qv2018} show that when the population agrees on the the probability $p$ that an additional vote will change the outcome of the election, the quadratic formula is the unique price-taking rule that accomplishes utilitarian optimality, i.e., at the equilibrium $\neprofile$, the sum of the votes $\totalvotes{\omega}$ agrees with the true preference of the whole society. \citet{multiplealternative} prove that this is also asymptotically\footnote[6]{In a sequence of infinite games, where agents vote using a pure Nash equilibria strategy. The variable in these games is the realization of $u$.} true in the multiple-issue QV case.


Another advantage of no-budget QV is that the agents are unrestrained on the number of votes they can cast as long as they are willing to give up the necessary funds. By allowing expression of high intensity of preferences, QV protects against the tyranny of the majority, as argued in \cref{th:properties}. However, a plausible concern is that wealthy individuals have undue influence over the election. The quadratic rule mitigates this by reducing the effectiveness of monetary units for those who cast more votes. To address this further, the fixed-budget QV variant completely removes the wealth difference between the agents by allocating them the same number of credits. \citet{qv2018} describe QV as \say{an optimal intermediate point between the extremes of dictatorship and majority rule.}

Disadvantages of no-budget QV include favourite betrayal (\cref{th:properties}) and that some of the agents might prefer to abstain from participating. \citet{multiplealternative} demonstrate that as $|N|$ tends to infinity, the monetary contribution of all agents decreases. Still, there are cases in which agents might obtain higher utility by abstaining: imagine that one (almost) indifferent agent $A$ has information that $B$ has very strong preferences, then, since changing the elected issue is very expensive, $A$ will not cast any votes (\cref{tbl:abstain}). Nevertheless, the most concerning disadvantage of QV, both budget-limited or not, is collusion, discussed next.

\begin{table}[t]
	\centering
	\caption{Abstaining from voting. If agent $A$ tried to make $X$ win or tie with $Y$, its final utility would be lower than the current strategy of abstaining.}
	\begin{tabular}{c c c c c}
		\toprule
		& $u^A$ & $u^B$ & $\allvotes{A}$ & $\allvotes{B}$ \\
		\midrule
		$X$ & 40 & 0 & 0 & -5 \\
		$Y$ & 30 & 400 & 0 & 5 \\
		\bottomrule
	\end{tabular} 
	\label{tbl:abstain}
\end{table}

\newcommand{\titleofColl}[0]{Collusion}
\subsection{\titleofColl}
\label{sec:collusion}
\newcommand{\xvotes}[2]{x^{#1}_{#2}}

We now examine the susceptibility of QV to collusion, which occurs when agents collaborate to obtain lower vote prices; if votes are spread among multiple individuals, then the marginal cost of buying one more vote can become significantly smaller. We begin by summarising \citet{robustness}'s analysis of collusion before proposing a new model which introduces additional restrictions based on rational behaviour assumptions. Throughout this subsection, the set of colluding agents $\mathscr{C}$ update their ballots from $v^{\mathscr{C}}(u)$ to the post-manipulation votes $x^{\mathscr{C}}$, while not changing the total number of votes placed for any outcome. 

\citet{robustness}'s analysis of strategic manipulation assumes that the coalition $\mathscr{C}$ is trying to maximise the joint utility of its members. To minimize the sum of payments incurred by the agents, the optimal strategy is for every agent in $\mathscr{C}$ to cast as many votes for outcome $\omega$ as the mean number of votes all coalition members placed for $\omega$ (as a result of the quadratic price). However, this analysis assumes that agents are selfless and benevolent, seeking the maximal well-being of the whole society. A more reasonable assumption, common in cooperative game theory \cite{chalkiadakis2022computational}, is that agents are, first and foremost, individualists and they would not participate in the coalition if their payment increases. We therefore characterise a stronger definition of collusion: the total sum of votes per outcome remains the same, but (i) no individual is made worse-off as part of the coalition and (ii) there is at least one agent who strictly benefits. This is formally defined in \cref{def:collusion}.

\begin{definition}(Strictly-beneficial collusion)
	\label{def:collusion}
	The set of agents $\mathscr{C}$ would benefit from colluding if and only if $\exists x\in \Z^{|\mathscr{C}|x|\Omega|}$ redistribution of their votes such that 
	
	\begin{enumerate}
		\item The total sum of votes per outcome remains the same:
		\begin{equation}
			\label{eq:totalconstant}
			\sum_{i \in \mathscr{C}} \votes{i}{\omega} = \sum_{i \in \mathscr{C}} \xvotes{i}{\omega}, \forall \omega \in \Omega.
		\end{equation}
		
		\item It satisfies individual rationality - all agents end up paying no more than if they were to vote independently:
		\begin{equation}
			\label{eq:indiv_rational}
			\forall i \in \mathscr{C} : \sum_{\omega \in \Omega} (\votes{i}{\omega})^2 \geq \sum_{\omega \in \Omega} (\xvotes{i}{\omega})^2.
		\end{equation}
		
		\item The price reduces for at least one member $c_+$ of the coalition:
		\begin{equation}
			\exists c_+ \in \mathscr{C} : \sum_{\omega \in \Omega} (\votes{c_+}{\omega})^2 > \sum_{\omega \in \Omega} (\xvotes{c_+}{\omega})^2.
			\label{eq:strict}
		\end{equation}
	\end{enumerate}
\end{definition}
We now outline two methods for colluding.
\noindent
\paragraph{Method 1: When an outcome has both pro and contra votes}
Consider an outcome $\omega$ which has both positive and negative votes; the votes for all other outcomes will remain unchanged by the coalition. We present a way to reallocate the votes for $\omega$ by eliminating the weaker direction of votes, such that all agents end up paying less than (or equal to) their original payments.

Since we are considering only one outcome for redistribution, we simplify
the notation such that $\allvotes{i} = \votes{i}{\omega}$ and
$x_i = \xvotes{i}{\omega}$. Let
$s^+(\strtprofile) = \sum_{i \in \mathscr{C} \land \allvotes{i} > 0}
\allvotes{i}$ be the sum of the positive votes and
$s^-(\strtprofile) = \sum_{i \in \mathscr{C} \land \allvotes{i} < 0 }
-\allvotes{i}$ be the absolute value of the negative support. Both
$s^+(\strtprofile)$ and $s^-(\strtprofile)$ are strictly positive due
to our assumption.  For ease of reasoning, we assume without loss of
generality that $s^+(\strtprofile) \geq s^-(\strtprofile) > 0$. The
other case is symmetric.

\Cref{alg:proandcon} iterates through the agents and if they expressed positive support for $\omega$, then it lowers the number of votes they cast as much as possible by \say{consuming} from the negative support. Since $s^+(\strtprofile) \geq s^-(\strtprofile)$, all negative votes are reduced to $0$. 
\begin{algorithm}[h!]
	\begin{algorithmic}
		\State $D \gets s^-(\strtprofile)$
		\For{$i \in \mathscr{C}$}
		\If{$\allvotes{i} > 0$} 
		\State $x_i \gets max(0, \allvotes{i}-D)$
		\State $D \gets D - (v_i - x_i)$
		\Else 
		\State $x_i \gets 0$
		\EndIf \EndFor
	\end{algorithmic}
	\caption{Opposing votes cancellation}
	\label{alg:proandcon}
\end{algorithm}
\begin{example}
	Let $\strtprofile = (7, -9, 5, -1, 1)$. Then\cref{tbl:proandconex} shows the initial value of $D$ is $s^-(\strtprofile) = 10$ and the iterations.
	
	\begin{table}[t]
		\centering
		\caption{Example of the first beneficial collusion strategy (\cref{alg:proandcon}). Votes shown only for the relevant outcome. Each row contains values after one complete iteration. }
		\begin{tabular}{c c c c c c c }
			\toprule
			Agents & $1$ & $2$ & $3$ & $4$ & $5$ & \\
			\midrule
			$\strtprofile$ & 7 & -9 & 5 & -1 & 1 & $D = 10$ \\
			$x$ & 0 & -9 & 5 & -1 & 1 & $D = 3$ \\
			$x$ & 0 & 0 & 5 & -1 & 1 & $D = 3$ \\
			$x$ & 0 & 0 & 2 & -1 & 1 &$D=0$ \\
			$x$ & 0 & 0 & 2 & 0 & 1 & $D=0$\\
			\bottomrule
		\end{tabular}
		\label{tbl:proandconex}
	\end{table}
\end{example}

\begin{restatable}{theorem}{collusiononed}
	\label{th:collusion_one_direction} When an outcome has both pro and contra votes in $\strtprofile$, \cref{alg:proandcon} generates a strictly-beneficial collusion strategy which satisfies \cref{eq:totalconstant}, \eqref{eq:indiv_rational} and \eqref{eq:strict}.
\end{restatable}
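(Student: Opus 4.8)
The plan is to read off the three defining conditions of strictly-beneficial collusion directly from the execution of \cref{alg:proandcon}. Since the coalition only reallocates votes on the single outcome $\omega$ while holding all other outcomes fixed, every squared-cost term for the other outcomes cancels in each of \eqref{eq:totalconstant}, \eqref{eq:indiv_rational}, and \eqref{eq:strict}, so it suffices to reason about the one-dimensional vectors $v_i = \votes{i}{\omega}$ and $x_i = \xvotes{i}{\omega}$. The single fact driving everything is a per-step description of the loop: when a positive voter $i$ is processed at depletion level $D$, the update sets $x_i = \max(0, v_i - D)$ and decreases $D$ by exactly $v_i - x_i = \min(v_i, D) \ge 0$, so $D$ is non-increasing and never becomes negative; non-positive voters are simply zeroed.

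From this, individual rationality \eqref{eq:indiv_rational} is immediate. Non-positive voters have $x_i = 0$, so $x_i^2 = 0 \le v_i^2$. Positive voters satisfy $0 \le x_i = \max(0, v_i - D) \le v_i$ because $D \ge 0$, hence $x_i^2 \le v_i^2$. Re-adding the untouched terms from the other outcomes preserves each inequality, giving \eqref{eq:indiv_rational}.

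The main work is the conservation identity \eqref{eq:totalconstant}, which I would reduce to the single claim that the loop terminates with $D = 0$, i.e.\ that all of the negative support is consumed. This is exactly where the standing assumption $s^+(\strtprofile) \ge s^-(\strtprofile) > 0$ is needed, and it is the step I expect to require the most care. I would argue by contradiction: if the final value of $D$ were strictly positive, then no processed positive voter could have had $\min(v_i, D) = D$ (that would have driven $D$ to $0$), so every positive voter contributed its full $v_i$, making the total amount subtracted from $D$ equal to $s^+(\strtprofile)$. But the total subtracted also equals $s^-(\strtprofile) - D_{\text{final}} < s^-(\strtprofile)$, forcing $s^+(\strtprofile) < s^-(\strtprofile)$ and contradicting the assumption. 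Hence $D$ ends at $0$: the positive voters shed a combined $s^-(\strtprofile)$ votes while the negatives are zeroed, so $\sum_{i} x_i = s^+(\strtprofile) - s^-(\strtprofile) = \sum_i v_i$, which is \eqref{eq:totalconstant}.

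Finally, for strictness \eqref{eq:strict} I would exhibit an explicit beneficiary. Because $s^-(\strtprofile) > 0$, at least one negative voter exists, and its cost on $\omega$ drops from $v_i^2 > 0$ to $0$, a strict decrease; equivalently, the first positive voter encountered is processed with $D = s^-(\strtprofile) > 0$ and so receives $x_i = \max(0, v_i - s^-(\strtprofile)) < v_i$, matching the $c_+$ notation in \eqref{eq:strict}. Either agent furnishes the required witness. The only genuinely non-routine ingredient is the $D = 0$ termination argument above; once the per-step reduction formula $v_i - x_i = \min(v_i, D)$ is in hand, the remaining checks are mechanical.
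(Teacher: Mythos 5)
Your proof is correct and follows essentially the same route as the paper's: the loop invariant $D \geq 0$ gives $|x_i| \leq |\allvotes{i}|$ per agent (hence \eqref{eq:indiv_rational}), conservation follows from the negative support being fully consumed, and strictness is witnessed by an explicit agent. Your only real addition is spelling out, via the contradiction argument from $s^+(\strtprofile) \geq s^-(\strtprofile)$, why $D$ terminates at $0$ — a step the paper asserts without proof — which makes your write-up slightly more rigorous but does not change the approach.
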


Proof in \cref{appendix:collusion_one_direction}.

\noindent
\paragraph{Method 2: When all votes are in the same direction (per outcome)}
The following procedure can be applied even if there are opposing votes for a subset of the outcomes, but to avoid confusion between the number of votes $\votes{i}{\omega}$ and the absolute intensity of support $|\votes{i}{\omega}|$, we will assume that all values are positive.

\begin{table}[t]
	\centering      
	\caption{The simplified strategy of Method 2, before colluding (left) and after colluding (right).}
	\label{fig:simple_strat}
	\begin{subtable}[t]{0.2\textwidth}
		\centering
		\begin{tabular}{ c c c }
			\toprule
			Agents & \multicolumn{2}{c}{$\strtprofile$} \\
			\cmidrule{2-3}
			& $\omega$ & $\phi$\\
			\midrule
			$A$ & $a$ & $b$ \\
			$B$ & $c$ & $d$ \\
			\bottomrule 
		\end{tabular} 
	\end{subtable}
	\begin{subtable}[t]{0.2\textwidth}
		\centering
		\begin{tabular}{c c c}
			\toprule
			Agents & \multicolumn{2}{c}{$x$} \\
			\cmidrule{2-3}
			& $\omega$ & $\phi$ \\
			\midrule
			$A$ & $a-1$\tikzmark{a} & $b+1$\tikzmark{b} \\
			$B$ & $c+1$\tikzmark{c} & $d-1$\tikzmark{d} \\
			\bottomrule
		\end{tabular}
		\begin{tikzpicture}[overlay, remember picture, shorten >=.5pt, shorten <=.5pt, transform canvas={yshift=.25\baselineskip}]
	\draw [->] ({pic cs:a}) [bend left] to ({pic cs:c});    
	\draw [->] ({pic cs:d}) [bend right] to ({pic cs:b});
\end{tikzpicture}
	\end{subtable}
\end{table}

We illustrate this method using a $2\times 2$ example, with two colluding agents and two outcomes. The set of agents in the coalition is $\mathscr{C} = \{A,B\}$ and their voting strategies $\allvotes{i}$ are presented in \cref{fig:simple_strat}, with $\strtprofile$ satisfying $a > b$ and $c < d - 1$. If $A$ and $B$ exchange one vote on $\omega$ for one vote on $\phi$ as shown, then the total number of votes per outcome stays constant and their utilities change according to \cref{eq:utils_in_2x2}; thus, no agent is disadvantaged by participating and at least one benefits:

\begin{equation}
	\label{eq:utils_in_2x2}
	\begin{split}
		pay^A(x)=(a-1)^2+(b+1)^2 &\leq a^2 + b^2=pay^A(\strtprofile), \\
		pay^B(x)=(c+1)^2+(d-1)^2 &< c^2+d^2=pay^B(\strtprofile).
	\end{split}
\end{equation}

\cref{th:cycle} says that this \say{exchange one vote for $\omega$ for one vote for $\phi$} design can be applied to a set of agents of arbitrary size provided that the individual strategies before colluding satisfy the following condition:

We define $G(\strtprofile) = \langle \Omega, E \rangle$ to be the \textit{claimed preference graph} of the strategy profile $\strtprofile$. The graph contains a directed edge from outcome $\omega$ to outcome $\phi$ if there is an agent that casts more votes for $\omega$ than $\phi$. We say an edge $(\omega \rightarrow^i \phi)$ is \textit{strictly beneficial} for $i$ if and only if $\votes{i}{\omega} < \votes{i}{\phi} - 1$. Formally, the edge set $E$ is defined by:
$$E = \{\omega \rightarrow^i \phi \mid \votes{i}{\omega} < \votes{i}{\phi} \}.$$

\begin{restatable}{theorem}{cycle}
	\label{th:cycle}
	If all votes are in the same direction (per outcomes) and $G(\strtprofile)$ contains a cycle with a strictly beneficial edge, then there is a set of agents that can benefit from strictly-beneficial collusion.
\end{restatable}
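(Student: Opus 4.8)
The plan is to turn any cycle of $G(\strtprofile)$ into an explicit redistribution $x$ that generalises the single-vote swap of the $2\times 2$ example in \cref{fig:simple_strat}, and then verify the three requirements of \cref{def:collusion}. First I would reduce to a convenient cycle: we may assume the cycle is simple, since if the strictly beneficial edge $e=(\omega\to\phi)$ lies on a cycle, the remainder of that cycle is a $\phi$--$\omega$ walk containing a simple $\phi$--$\omega$ path, which together with $e$ forms a simple cycle $C:\ \omega_0 \rightarrow^{i_0} \omega_1 \rightarrow^{i_1} \cdots \rightarrow^{i_{k-1}} \omega_0$ through $e$ with pairwise distinct outcomes. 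Recall each edge $\omega_j \rightarrow^{i_j} \omega_{j+1}$ means $\votes{i_j}{\omega_j} < \votes{i_j}{\omega_{j+1}}$, so the \emph{tail} $\omega_j$ carries strictly fewer of agent $i_j$'s votes than the \emph{head} $\omega_{j+1}$. The redistribution is then: for every edge of $C$ its labelling agent $i_j$ moves one vote from the head to the tail, i.e. $\xvotes{i_j}{\omega_j} = \votes{i_j}{\omega_j} + 1$ and $\xvotes{i_j}{\omega_{j+1}} = \votes{i_j}{\omega_{j+1}} - 1$ (all other votes unchanged), taking $\mathscr{C}$ to be the agents labelling $C$'s edges. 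For $k=2$ this is exactly the exchange in \eqref{eq:utils_in_2x2}.

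Next I would check \cref{eq:totalconstant}. Summing the per-agent changes at a fixed outcome, each vertex of the simple cycle is the tail of exactly one cycle edge (contributing $+1$) and the head of exactly one (contributing $-1$), so the net change of the total vote count is zero for every outcome, while outcomes off the cycle are untouched. Hence the tallies, and therefore the election result, are preserved.

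The main obstacle is \cref{eq:indiv_rational} together with \eqref{eq:strict}, because a single agent may label several edges of $C$ and the corresponding per-edge transfers can overlap on a shared outcome. I would resolve this by grouping, for each agent $i$, the $i$-labelled edges into their \emph{maximal runs} of consecutive cycle edges. Along a run $\omega_p \rightarrow \cdots \rightarrow \omega_q$ the intermediate $\pm 1$ contributions cancel, so the run's net effect on $i$ is merely $+1$ on $\omega_p$ and $-1$ on $\omega_q$, while the edge inequalities chain to $\votes{i}{\omega_p} < \votes{i}{\omega_{p+1}} < \cdots < \votes{i}{\omega_q}$, i.e. $\votes{i}{\omega_p} \le \votes{i}{\omega_q} - 1$. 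The change in $i$'s payment over the run is then $(\votes{i}{\omega_p}+1)^2 + (\votes{i}{\omega_q}-1)^2 - (\votes{i}{\omega_p})^2 - (\votes{i}{\omega_q})^2 = 2(\votes{i}{\omega_p} - \votes{i}{\omega_q} + 1) \le 0$, which just says that shifting a unit from a strictly larger coordinate to a strictly smaller one never increases a sum of squares. Distinct runs of the same agent act on disjoint outcomes, so summing over runs gives \cref{eq:indiv_rational} for every $i\in\mathscr{C}$.

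Finally, for the run containing the strictly beneficial edge the chain of inequalities has a gap of at least $2$, giving $\votes{i}{\omega_p} \le \votes{i}{\omega_q} - 2$ and hence a payment decrease of at most $-2$; this establishes \eqref{eq:strict} with $c_+$ the agent labelling $e$. I expect the maximal-run decomposition to be the only delicate step; once it is in place, the argument is just the $2\times 2$ computation applied coordinate-wise, and the result follows.
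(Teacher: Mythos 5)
Your construction is the same as the paper's: pass to a simple cycle through the strictly beneficial edge, have the agent labelling each edge move one vote from the head outcome to the tail outcome, and verify \cref{eq:totalconstant}, \eqref{eq:indiv_rational} and \eqref{eq:strict} via the identity $(a+1)^2+(b-1)^2-a^2-b^2=2(a-b+1)$. The one place you genuinely diverge is in how you verify individual rationality, and your version is the more careful one. The paper's proof asserts that, because the cycle is simple, exactly two coordinates of agent $i$'s ballot change; that is only true if each agent labels at most one edge of the cycle, and a simple cycle in $G(\strtprofile)$ has distinct outcomes but can perfectly well reuse the same agent on several edges (including two consecutive edges, where the $\pm 1$ increments overlap on the shared outcome). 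Your maximal-run decomposition handles this: within a run the intermediate increments cancel, the chained strict inequalities give $\votes{i}{\omega_p}\le \votes{i}{\omega_q}-1$ (and $\le \votes{i}{\omega_q}-2$ for the run through the strictly beneficial edge, whence \eqref{eq:strict}), and distinct runs of one agent touch disjoint outcomes so the per-run bounds add. Note also that a run can never be the entire cycle, since the inequalities would then chain to $\votes{i}{\omega_0}<\votes{i}{\omega_0}$; this is implicit in your argument but worth stating, as it is exactly what guarantees every run has two distinct endpoints. In short, your proof is correct and, on the individual-rationality step, closes a case the paper's own write-up silently assumes away.
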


\begin{proof}[Proof sketch]
	The detailed proof is in \cref{appendix:cycle}. If the graph has a cycle, it also has a simple cycle. The coalition is constructed by selecting agents that are part of the fixed simple cycle in $G(\strtprofile)$ and shows that if vote \say{exchanges} are according to the edges of this cycle, the coalition satisfies all the required conditions.
\end{proof}

Observe that the condition in \cref{th:cycle} is sufficient, but not necessary. It may be the case that the graph is acyclic but the agents can still benefit by colluding. For example, in \cref{fig:nocycle_coal}, the prices incurred by every agent decrease after participating in
the coalition:
\begin{equation*}  
	\begin{split}
		\sum_{\omega} &(\votes{A}{\omega})^2 {=}6^2{+}5^2{+}4^2{=}77 
		 {>} \sum_{\omega} (\xvotes{A}{\omega})^2{=}7^2{+}4^2{+}3^2{=}74. 
		\\ \sum_{\omega} & (\votes{B}{\omega})^2{=}10^2{+}1^2{+}1^2{=}102
		 {>}\sum_{\omega} (\xvotes{B}{\omega})^2{=}9^2{+}2^2{+}2^2{=}89.
	\end{split}
	\label{eq:nocycle_pay}
\end{equation*}
\begin{table}[t]
	\centering
	\caption{Coalition on a non-cyclical preference graph, before colluding (left) and after colluding (right).}
	\label{fig:nocycle_coal}
	\begin{subtable}[b]{0.2\textwidth}
		\centering
		\begin{tabular}{ c c c c  }
			\toprule
			Agents & \multicolumn{3}{c}{$\strtprofile$}\\
			\midrule
			$A$ & $6$ & $5$ & $-4$ \\
			$B$ & $10$ & $1$ & $-1$ \\
			\bottomrule 
		\end{tabular}
	\end{subtable}
	\quad
	\begin{subtable}[b]{0.2\textwidth}
		\centering
		\begin{tabular}{c c c c }
			\toprule
			Agents & \multicolumn{3}{c}{$x$}\\
			\midrule
			$A$ & $7$ & $4$ & $-3$ \\
			$B$ & $9$ & $2$ & $-2$ \\
			\bottomrule 
		\end{tabular}
	\end{subtable}
\end{table}

In practical settings, almost all graphs will contain a cycle that involves a
strictly beneficial edge because the preferences of agents will not align perfectly. This suggests that the strategy presented in this section is widely applicable.

\vspace*{1ex}\noindent\textbf{Discussion:}
Collusion is possible in most voting mechanisms, to different degrees. Regarding QV, it can be an issue for elections with a small number of agents, if an agent with a strong preference convinces (by reason or even bribing) other agents to split their votes with it, given that their marginal vote price is smaller. In this case, every agent has a large impact on the outcome of the election. The main way to avoid such manipulations is to ensure complete voting secrecy, in the sense that no one can prove what they voted. The most popular solution is E2E voting, which is rigorously discussed and applied to QV in \cite{secureqv}.

On the other hand, \citet{robustness} proves that despite QV's susceptibility to collusion, in order for the coalition to spoil QV's utilitarian efficiency, the coalition size has to be significantly large compared to the society's size. Due to coordination problems, this is unlikely for large-scale elections (e.g., $\sqrt{|N|}$, if $N$ represents California), and even more so in our stricter definition of collusion.


\newcommand{\titleofCom}[0]{Computational Complexity}
\section{\titleofCom}
\label{sec:complexity}
In this section, we show that determining whether a strategy profile is a (pure) NE is polynomial-time solvable for both no-budget and fixed-budget multiple-issue QV.


\subsection{No budget limit}
\label{sec:complexity_no_budget}

Given the instance $\langle N, \Omega, u, v \rangle$, with utilities $u \in \Z^{|N| \times |\Omega|}$ and strategy profile $\strtprofile \in \Sigma^{|N|}$, we aim to verify if $\strtprofile$ forms a pure NE. In practice, no agent has an infinite budget, so we can assume the maximal number of votes placed on any outcome is bounded by a (large) constant. Still, the size of the strategy set $\Sigma$ is exponential in $|\Omega|$. 

\begin{definition}
	Define $\QVNE$ to be the language containing $\langle N, \Omega, u, \strtprofile \rangle$-instances, each an election with society $N$, outcomes $\Omega$, preferences described by the utility matrix $u$, and where $\strtprofile$ is a NE:
	$$ \QVNE = \{\langle N, \Omega, u, \strtprofile \rangle \mid \strtprofile \text{ represents a (pure) NE} \}.$$
	
\end{definition}

The naive way to verify whether $\strtprofile$ is a pure NE involves iterating through the entire set of strategies $\Sigma$ for each agent and checking if the alternative strategy provides greater utility (i.e., if the agent has a beneficial deviation). Alternatively, this problem can be solved by a non-deterministic Turing machine by guessing the certificate (i.e., the agent and its alternative strategy of length $|\Omega|$) and verifying in polynomial time if the resulting utility is greater than the original. This indicates $\QVNE \in$ NP. However, we are in fact able to present a polynomial time algorithm to solve this problem.

First, we define the decision problem $\DEVIATE$, and show that it is solvable in polynomial time in \cref{th:deviate}.

\begin{definition} $\DEVIATE$, contains instances of the shape $\langle N, \Omega, u, \strtprofile, i \rangle$, such that, in the election with society $N$, outcomes $\Omega$ and utility matrix $u$, agent $i$ can beneficially deviate from the strategy profile $\strtprofile$:
  \begin{equation*}
  \begin{split}
      \DEVIATE &=\\  \{\langle N, \Omega, &u, \strtprofile, i \rangle \mid \text{agent $i$ has a beneficial deviation}  \}
  \end{split}
  \end{equation*}
\end{definition}

\begin{restatable}{theorem}{deviate}
	$\DEVIATE \in \mathrm{P}$.
	\label{th:deviate}
\end{restatable}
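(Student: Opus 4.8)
The plan is to reduce the search over the infinite (or exponentially large) strategy set $\Sigma = \Z^{|\Omega|}$ to a polynomial enumeration, exploiting the quadratic cost and the simple structure of the winner set. Fix the deviating agent $i$ and write $S_\omega = \sum_{j \neq i} \votes{j}{\omega}$ for the (fixed) contribution of the other agents to outcome $\omega$. If $i$ deviates to a ballot $x \in \Z^{|\Omega|}$, the total for $\omega$ becomes $S_\omega + x_\omega$, the winner set is $W(x) = \argmax_{\omega}(S_\omega + x_\omega)$, and---since the refund term is independent of $x$ and can be dropped---the quantity $i$ maximises is
\begin{equation*}
	F(x) = \frac{1}{|W(x)|}\sum_{\omega \in W(x)} \util{i}{\omega} \;-\; \alpha \sum_{\omega \in \Omega} x_\omega^2 .
\end{equation*}
By definition $\DEVIATE$ is a YES-instance exactly when $\max_x F(x) > U^i(\strtprofile)$, so it suffices to compute $\max_x F(x)$ in polynomial time.

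Next I would put an optimal deviation into a canonical form. Given any ballot $x$, let $M = \max_\omega (S_\omega + x_\omega)$ and $W = W(x)$. Without altering $W$ (hence the expected-outcome term) one can only lower the cost by setting each winner's total to exactly $M$ and each loser $\omega \notin W$ to $\min(S_\omega, M-1)$, the cheapest total that keeps it strictly below $M$. Thus an optimal deviation is fully described by an integer level $M$ together with the winner set $W$: winners pay $\alpha (M - S_\omega)^2$ and each loser pays $\alpha \max(0, S_\omega - M + 1)^2$. Since all votes may be assumed bounded by a fixed constant $V$ (the budget cap mentioned above), every $S_\omega$ lies in a polynomial-size interval, and the only levels worth considering satisfy $\min_\omega S_\omega - V \le M \le \max_\omega S_\omega + V$, a range of polynomial size, over which I would enumerate.

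The crux of the argument is the averaging $\tfrac{1}{|W|}\sum_{\omega \in W}\util{i}{\omega}$, which couples the outcomes and blocks any greedy per-outcome decision: whether it pays to add an outcome to $W$ depends on how many others are already in $W$. I would break this coupling by also fixing the cardinality $k = |W|$ (at most $|\Omega|$ choices). Writing $c_\omega = (M - S_\omega)^2 - \max(0, S_\omega - M + 1)^2$ and the constant $A = \sum_{\omega} \max(0, S_\omega - M+1)^2$, the objective over winner sets of size exactly $k$ becomes separable:
\begin{equation*}
	F = \sum_{\omega \in W}\left(\frac{\util{i}{\omega}}{k} - \alpha c_\omega\right) - \alpha A .
\end{equation*}
Hence, for each fixed pair $(M,k)$, the optimal $W$ is simply the set of $k$ outcomes with the largest weights $\util{i}{\omega}/k - \alpha c_\omega$, obtained by sorting or linear-time selection.

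Assembling these pieces, I would loop over all polynomially many levels $M$ and all $k \in \{1,\dots,|\Omega|\}$, compute the best size-$k$ winner set by top-$k$ selection, evaluate $F$, and take the overall maximum, which equals $\max_x F(x)$ by the canonical-form reduction; comparing it with $U^i(\strtprofile)$ decides $\DEVIATE$. Each iteration costs $O(|\Omega| \log |\Omega|)$ and the number of iterations is polynomial, so the whole procedure runs in polynomial time, giving $\DEVIATE \in \mathrm{P}$. The step I expect to be the main obstacle is justifying that fixing $k$ restores separability and makes top-$k$ selection optimal, together with the canonical-form claim that an optimal deviation always places all winners at one common level and every loser at its cheapest sub-threshold total; once these are established, the remaining bounds and bookkeeping are routine.
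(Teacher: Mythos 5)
Your canonical form (all winners at a common level $M$, each loser at its cheapest total strictly below $M$) and your observation that fixing both $M$ and $k=|W|$ makes the objective separable are both correct; in fact your single weight $\util{i}{\omega}/k-\alpha c_\omega$ unifies what the paper treats as two separate cases (outcomes the agent must pull \emph{down} to keep them at the threshold versus outcomes it must push \emph{up} into the winner set), and the top-$k$ selection you flagged as the main obstacle is sound precisely because you enumerate every $k$. The genuine gap is elsewhere: the loop over $M$. The interval $[\min_\omega S_\omega - V,\ \max_\omega S_\omega + V]$ contains on the order of $\max_\omega S_\omega - \min_\omega S_\omega$ integer levels, which is polynomial in the \emph{magnitudes} of the votes but exponential in their binary encoding, so your algorithm is only pseudo-polynomial. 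The ``bounded by a (large) constant'' remark in the paper is a motivational aside that its own proof never uses; as written, your argument establishes membership in $\mathrm{P}$ only under the extra assumption that vote magnitudes are polynomially bounded in the input size.

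The paper closes exactly this hole by never enumerating the level. It relaxes $V$ to $\R$, solves the first-order condition $\partial U^i/\partial V = 0$ to obtain a closed-form optimal $V^*$ as a function of the structural parameters (the count $m$ of outcomes already above the threshold and the sizes of the two parts of $W$), and then rounds to the admissible integers. The price of not knowing $M$ in advance is that the ranking of the ``push-up'' outcomes by their scores $g_\omega(V)=\util{i}{\omega}/|W|-\alpha(V+1-S_\omega)^2$ now depends on $V$; the paper observes that this ranking can only change at the $O(|\Omega|^2)$ pairwise intersection points of these parabolas, so it suffices to enumerate the intervals between consecutive intersections, within each of which the optimal selection is fixed. (The ``pull-down'' outcomes need no such care, since their ranking by $\util{i}{\omega}/|W|+2\alpha S_\omega$ is independent of $V$ --- your unified weight recovers exactly this after discarding an additive constant.) If you replace your loop over $M$ with this first-order-condition-plus-intersection-intervals argument, keeping your cleaner separability step, you obtain a genuinely polynomial algorithm matching the paper's.
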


\begin{proof}[Proof sketch]
	Instead of iterating through all of agent $i$'s possible strategies, we construct a subset of strategies $\PBR$, with size polynomial in $|N|$ and $|\Omega|$, and show that $\PBR$ is guaranteed to contain a best response. Assume agent $i$'s best response is to bring all outcomes in the set $W$ to $V + 1$ votes. Then, it will have to cast positive votes for some outcomes $W^+$ and negative votes for some outcomes $W^-$. Consider the sizes of these sets to be fixed to $\sizewneg = |\Wneg|$ and $\sizewpos = |\Wpos|$. We prove that the optimal $\Wneg$ is independent from $\Wpos$ or $V$ and that it can be found in $\O(\sizewneg \log \sizewneg)$.
	
	Each member of $\Wpos$ influences the value of $U^i$ with a quadratic function that also depends on $V$. Note that if $V$ is in a specific interval, it characterises which are the best $\sizewpos$ outcomes that should be included in $\Wpos$. There are $\O(|\Omega|^2)$ intervals described by the quadratic terms and, hence, that many options for $\Wpos$.
	Now, we can recreate $\woptim$ from $\Wpos$ and $\Wneg$ and obtain the optimal $V$ for this set using optimization techniques. In total, there is a small number of variables taking polynomially many values. We present a rigorous proof and algorithm in \cref{appendix:deviate}.
\end{proof}
\begin{restatable}[Completeness]{lemma}{completeness}
	\label{lemma:completeness} 
	If agent $i$ has a beneficial deviation, then there exists $\sigma^* \in \PBR$ such that $U^i(\allvotes{-i}, \sigma^*) > U^i(\allvotes{-i}, \allvotes{i})$ (i.e., $\sigma^*$ is a best response).
\end{restatable}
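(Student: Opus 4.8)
The plan is to prove the slightly stronger statement that $\PBR$ \emph{always} contains a best response to $\allvotes{-i}$; the lemma is then immediate, since if a beneficial deviation $\sigma'$ exists and $\sigma^* \in \PBR$ is a best response, then $U^i(\allvotes{-i}, \sigma^*) \ge U^i(\allvotes{-i}, \sigma') > U^i(\allvotes{-i}, \allvotes{i})$. A best response exists because the expected-outcome term is bounded by $\max_\omega \util{i}{\omega}$ while the payment $\alpha \sum_\omega (\votes{i}{\omega})^2$ diverges as the ballot grows, so the supremum of $U^i$ is attained at a finite ballot.

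Fixing the best response, write $s_\omega = \sum_{j \neq i} \votes{j}{\omega}$ for the foreign support of each outcome. First I would argue that it can be taken in \emph{canonical form}: there is a single winning level $V$ for which every winner $\omega \in W$ has total exactly $V$ (so $\votes{i}{\omega} = V - s_\omega$), every non-winner has total at most $V-1$, and for a non-winner the agent casts the cheapest sufficient ballot, namely $0$ if $s_\omega \le V-1$ and $V-1-s_\omega$ otherwise. Two exchange arguments justify this: pushing a winner above the common maximum, or a loser below $V-1$, strictly increases the payment without changing the winning set, so an optimal ballot does neither. This reduces the search to the discrete parameters $V$, $\Wpos = \{\omega : \votes{i}{\omega} > 0\}$, $\Wneg = \{\omega : \votes{i}{\omega} < 0\}$, and $|W|$.

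Next I would show each parameter family is polynomially enumerable. For $\Wneg$, an exchange argument shows that, for a fixed cardinality $\sizewneg$, the optimal negative-vote set consists of the $\sizewneg$ outcomes of largest foreign support $s_\omega$; this prefix in sorted order is found by a single sort in $\O(\sizewneg \log \sizewneg)$ and, crucially, depends on neither $\Wpos$ nor the exact value of $V$. For $\Wpos$, with $|W|$ fixed, the gain from placing $\omega$ at level $V$ is $g_\omega(V) = \util{i}{\omega}/|W| - \alpha(V - s_\omega)^2$, and the optimal $\Wpos$ of size $\sizewpos$ is the top-$\sizewpos$ under $g_\cdot(V)$. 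The key observation is that $g_\omega(V) - g_{\omega'}(V)$ is \emph{linear} in $V$, because the $\alpha V^2$ terms cancel; hence each pair of outcomes changes relative order at most once, the $\O(|\Omega|^2)$ crossing points partition the $V$-axis into $\O(|\Omega|^2)$ intervals, and on each interval the top-$\sizewpos$ set, and thus $\Wpos$, is constant. Enumerating intervals yields $\O(|\Omega|^2)$ candidate sets $\Wpos$.

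Finally, with $\Wpos$, $\Wneg$ and $|W|$ fixed, reconstructing $W$ and minimising the residual objective leaves a one-dimensional, interval-wise quadratic in $V$, optimised in closed form over the integers. Ranging over the $\O(|\Omega|)$ values of each of $\sizewpos, \sizewneg$, the $\O(|\Omega|^2)$ intervals, and the resulting integer optima for $V$ defines $\PBR$ with size polynomial in $|N|$ and $|\Omega|$, and by construction it contains the canonical best response, proving completeness. The main obstacle is the $\Wpos$ count: establishing that the top-$\sizewpos$ selection is piecewise-constant with only $\O(|\Omega|^2)$ pieces (which rests entirely on the cancellation of the $V^2$ term) while simultaneously decoupling this choice from $\Wneg$, from $|W|$, and from the level $V$, given that the $1/|W|$ factor couples the utility across all winners at once. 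The canonical-form exchange arguments are routine but must be discharged first, so that the counting argument can be stated over a clean, finite parameter set.
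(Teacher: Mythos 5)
Your overall architecture is the same as the paper's: the lemma itself is discharged by the observation that every strategy $\sigma$ maps to a tuple of discrete parameters and $\PBR$ contains, for each tuple, a strategy at least as good, so the real work is the construction of $\PBR$ (which the paper does in the proof of \cref{th:deviate}). Your canonical-form reduction, the decoupling of the negative-vote choice from $V$ and $\Wpos$, and the key observation that the pairwise differences $g_\omega(V)-g_{\omega'}(V)$ are linear in $V$ (so the top-$\sizewpos$ selection is piecewise constant over $\O(|\Omega|^2)$ intervals) all match the paper's argument.

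However, there is one genuine error: your selection rule for $\Wneg$. You claim the optimal set of size $\sizewneg$ consists of the $\sizewneg$ outcomes of largest foreign support $\sumvotes{i}{\omega}$. In the paper's construction $\Wneg$ is the subset of \emph{winners} that receive negative votes (to be pulled down to $V+1$ from above), so each $\omega \in \Wneg$ contributes its utility $\util{i}{\omega}/|W|$ to the expected-outcome term as well as a payment saving of $2\alpha(\sumvotes{i}{\omega}-V)-\alpha$ relative to suppressing it to $V$. The correct per-outcome score is therefore $f(\omega)=\util{i}{\omega}/|W| + 2\alpha\,\sumvotes{i}{\omega}$ (the $V$-dependence cancels, which is what actually buys the independence from $V$ and $\Wpos$), not $\sumvotes{i}{\omega}$ alone. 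With your rule, an outcome with enormous foreign support but strongly negative utility would be forced into $W$ ahead of a slightly lower-support outcome the agent loves, so the resulting $\PBR$ can fail to contain a best response and completeness breaks. Part of the confusion is that you define $\Wneg=\{\omega:\votes{i}{\omega}<0\}$, which conflates winners pulled down to the winning level with losers pulled down below it; the latter set is forced once $V$ is fixed and is not a free parameter, while the former is the genuine choice and must be scored by $f$. The slip is localized and fixable by substituting the correct score, after which the rest of your argument goes through as in the paper.
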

\begin{proof}[Proof sketch]
	$\sigma^*$ corresponds to a set of variables $S$ which the algorithm iterates over. Therefore, since $\PBR$ contains the best responses for each combination of such variables, the strategy selected for $S$ will be the best response. This is also formally shown in \cref{appendix:completeness}.
\end{proof}
\begin{lemma}[Soundness]
	\label{lemma:soundness}
	If agent $i$ does not have a beneficial deviation, then $\forall \sigma \in \PBR$,
	$U^i(\allvotes{-i}, \sigma) \leq U^i(\allvotes{-i}, \allvotes{i})$.
\end{lemma}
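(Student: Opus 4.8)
The plan is to prove the soundness of the construction in \cref{th:deviate} by establishing the contrapositive of \cref{lemma:completeness}, combined with the observation that every element of $\PBR$ is itself a valid strategy in $\Sigma$. The key fact I would use is that $\PBR \subseteq \Sigma$ by construction: each candidate in $\PBR$ is a genuine ballot $\sigma \in \Z^{|\Omega|}$ that agent $i$ could actually play. Consequently, if agent $i$ could obtain strictly greater utility by deviating to some $\sigma \in \PBR$, then $\sigma$ would itself constitute a beneficial deviation, contradicting the hypothesis that no beneficial deviation exists.

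Concretely, I would argue as follows. Suppose, for contradiction, that agent $i$ has no beneficial deviation, yet there exists $\sigma \in \PBR$ with $U^i(\allvotes{-i}, \sigma) > U^i(\allvotes{-i}, \allvotes{i})$. Since $\PBR \subseteq \Sigma$, the strategy $\sigma$ is available to agent $i$, so this strict inequality says precisely that switching from $\allvotes{i}$ to $\sigma$ strictly increases $i$'s utility while the other agents keep their strategies $\allvotes{-i}$ fixed. By the definition of a beneficial deviation, this means agent $i$ \emph{does} have a beneficial deviation, contradicting our assumption. Hence no such $\sigma$ can exist, and therefore $U^i(\allvotes{-i}, \sigma) \leq U^i(\allvotes{-i}, \allvotes{i})$ for all $\sigma \in \PBR$, which is exactly the claim.

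The main thing to verify carefully is that $\PBR$ really is a subset of the legal strategy set $\Sigma = \Z^{|\Omega|}$, i.e., that every element the algorithm enumerates corresponds to an honest ballot rather than some relaxed or fractional object used only for the analysis. Given the construction in the proof of \cref{th:deviate}, where each candidate strategy is assembled from integer vote counts on the sets $\Wpos$ and $\Wneg$ together with an integer threshold $V$, this inclusion is immediate. I expect this to be the only genuine obstacle, and it is a mild one; the rest of the argument is a direct logical contrapositive. Indeed, this soundness lemma is essentially the trivial converse to the completeness lemma: completeness guarantees $\PBR$ \emph{contains} a profitable deviation whenever one exists, while soundness merely notes that any profitable-looking element of $\PBR$ is a real deviation, so that the algorithm never reports a false positive. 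I would therefore keep the proof short and explicitly flag the $\PBR \subseteq \Sigma$ containment as the one load-bearing step.
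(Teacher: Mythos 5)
Your proof is correct and takes essentially the same approach as the paper: both arguments reduce soundness to the containment $\PBR \subseteq \Sigma$, with your contrapositive phrasing being logically identical to the paper's direct one-line derivation.
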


\begin{proof}
	If agent $i$ does not have a beneficial deviation,  then \\$U^i(\allvotes{-i}, \sigma) \leq U^i(\allvotes{-i}, \allvotes{i})$, $\forall \sigma \in \Sigma$. Since, $\PBR \subseteq \Sigma$, the result is immediate.
\end{proof}

Since $\DEVIATE$ is polynomial-time solvable, we can prove that $\QVNE$ can also be decided in polynomial time.

\begin{restatable}{theorem}{qvne}
	\label{th:qvne}
	$\QVNE \in \mathrm{P}$.
\end{restatable}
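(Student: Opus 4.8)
The plan is to reduce $\QVNE$ to a polynomial number of calls to the $\DEVIATE$ oracle, which by \cref{th:deviate} runs in polynomial time. Recall that a strategy profile $\strtprofile$ is a pure Nash equilibrium precisely when no agent has a beneficial unilateral deviation; that is, $\strtprofile \in \QVNE$ if and only if for every agent $i \in N$, the instance $\langle N, \Omega, u, \strtprofile, i \rangle \notin \DEVIATE$. This gives an immediate algorithmic characterisation.

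First I would describe the verification algorithm: given an input $\langle N, \Omega, u, \strtprofile \rangle$, iterate over each of the $|N|$ agents $i$, and for each one invoke the polynomial-time decision procedure for $\DEVIATE$ on $\langle N, \Omega, u, \strtprofile, i \rangle$. If any call returns that agent $i$ has a beneficial deviation, reject (so $\strtprofile$ is not a NE); if all $|N|$ calls report no beneficial deviation, accept. Second, I would argue correctness directly from the definition of a pure NE quoted in the preliminaries: the profile is a NE exactly when every agent is simultaneously playing a best response, i.e.\ $U^i(\strtprofile) \geq U^i(\votes{-i}{}, \sigma)$ for all $i \in N$ and all $\sigma \in \Sigma$, which is the logical negation of the disjunction ``some agent has a beneficial deviation.'' Soundness and completeness of the overall test therefore follow from the correctness of the $\DEVIATE$ procedure (itself established via \cref{lemma:completeness} and \cref{lemma:soundness}).

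Finally I would bound the running time. The algorithm makes exactly $|N|$ calls to $\DEVIATE$, each of which terminates in time polynomial in the size of the instance $\langle N, \Omega, u, \strtprofile, i \rangle$ by \cref{th:deviate}; since each such instance has size at most that of the original input plus the $\O(\log |N|)$ bits needed to name the agent $i$, the total running time is a polynomial multiplied by $|N|$, hence polynomial in the input size. Therefore $\QVNE \in \mathrm{P}$.

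Honestly, there is no genuine obstacle here: all the technical difficulty has already been absorbed into \cref{th:deviate} and its supporting lemmas, and the present statement is essentially a wrapper that loops the deviation check over all agents. The only point requiring a little care is confirming that the reduction incurs only a polynomial-factor blowup (the factor of $|N|$ from iterating over agents), and that forming each $\DEVIATE$ instance from the $\QVNE$ instance is itself a trivial polynomial-time transformation.
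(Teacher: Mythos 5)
Your proposal is correct and follows exactly the same route as the paper's proof: a polynomial-time Turing reduction that calls the $\DEVIATE$ decider once per agent and rejects iff any call reports a beneficial deviation. The extra detail you give on correctness and the $|N|$-factor running-time bound is consistent with, and slightly more explicit than, the paper's version.
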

\begin{proof}
	Let $I$ be the input instance to $\QVNE$. Run the decider for $\DEVIATE$ with input $\langle I, i \rangle$ for each agent $i \in N$. If at least one agent has a beneficial deviation, then $I \notin \QVNE$. Hence, $\QVNE$ is Turing reducible to $\DEVIATE$ using a polynomial number of calls. Using \cref{th:deviate}, we therefore see $\QVNE \in$ P.
\end{proof}

\subsection{Fixed budget}
\label{sec:complex_fixed_budget}
We now consider the same question as in the previous subsection, but for the fixed-budget variation of QV. In this setting, all agents receive a budget of $B$ credits to spend on voting. We can assume $B$ is at most polynomial in the size of the population, as, in practice, any budget which is much larger than the number of issues would encourage agents to extreme vote allocations.
Recall (in \cref{def:fixedbudgetQV}) that agent $i$'s total utility is :
\begin{gather*}
	U^i(\strtprofile) = \frac{1}{|W|}\sum_{\omega \in W} \util{i}{\omega}, \quad \text{ where } \quad \sum_{\omega \in \Omega} (\votes{i}{\omega})^2 \leq B.
\end{gather*}

We now introduce $\DEVIATE_B$ and $\QVNE_B$:

\begin{definition} Define $\DEVIATE_B$ to be the set of instances representing elections with the set of agents $N$, outcomes $\Omega$, and utility matrix $u$, where the agent $i$ has a beneficial deviation from $\strtprofile$, given that all strategies are limited by the number of credits $B$. 
  \begin{equation*}
  \begin{split}
		\DEVIATE_B & = \\ \{\langle B, N, &\Omega, u, \strtprofile, i \rangle \mid \text{agent $i$ has beneficial deviation}\}. 
  \end{split}
\end{equation*}
\end{definition}

\begin{definition}
	Let $\QVNE_B$ be the problem of determining if a given strategy profile $\strtprofile$ forms a NE in an election with society $N$, budget $B$, and agent preferences described by $u$ on outcomes $\Omega$. Formally,
	\begin{equation*}
		\QVNE_B = \{\langle B, N, \Omega, u, \strtprofile \rangle \mid \strtprofile \text{ is a pure-strategy NE} \}.
	\end{equation*}
\end{definition}

\Cref{th:dev_b} shows a tighter bound on the complexity of $\DEVIATE_B$, than the naive \say{guess and check} algorithm.

\begin{restatable}{theorem}{devb}
	$\DEVIATE_B \in \mathrm{P}$.
	\label{th:dev_b}
\end{restatable}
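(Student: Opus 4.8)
The plan is to mirror the strategy behind \cref{th:deviate}: reduce the existence of a beneficial deviation to a polynomial search over a small number of structural parameters, and exploit the standing assumption that $B$ is polynomial in the input so that a cost-indexed dynamic program runs in polynomial time. Fix the deviating agent $i$ and write $t_\omega = \sum_{j \neq i}\votes{j}{\omega}$ for the votes the \emph{other} agents place on $\omega$; these are constants while $i$ deviates. Any deviation of $i$ is completely described by the winning level $V$ (the common total count attained by the argmax outcomes) together with the winning set $W$. Since all winners must tie at the maximum, for $\omega \in W$ the agent is forced to set $\votes{i}{\omega} = V - t_\omega$ at cost $(V - t_\omega)^2$; for a loser $\omega \notin W$ it only needs $t_\omega + \votes{i}{\omega} \leq V-1$, so the cheapest admissible choice is $\votes{i}{\omega} = \min(0,\, V - 1 - t_\omega)$, costing $(t_\omega - V + 1)^2$ when $t_\omega \geq V$ and $0$ otherwise. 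Hence every feasible deviation induces some pair $(V,W)$, its utility is exactly $\tfrac{1}{|W|}\sum_{\omega \in W}\util{i}{\omega}$, and its minimal budget cost is separable across outcomes.

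First I would bound the search space for $V$. From $\sum_{\omega}(\votes{i}{\omega})^2 \leq B$ each component obeys $|\votes{i}{\omega}| \leq \sqrt{B}$, and each $t_\omega$ has magnitude at most $(|N|-1)\sqrt{B}$; therefore $V$ lies in an interval of length $O(|N|\sqrt{B})$, which is polynomial precisely because $B$ is. I would enumerate every integer $V$ in this interval.

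Next, for each fixed $V$, I would select the winning set so as to maximise utility within budget. I would model each outcome as offering a \emph{winner mode} (cost $(V-t_\omega)^2$, value $\util{i}{\omega}$, consuming one cardinality slot) or a \emph{loser mode} (the cost above, value $0$). For a fixed target size $k = |W|$, maximising $\tfrac{1}{k}\sum_{\omega \in W}\util{i}{\omega}$ is the same as maximising $\sum_{\omega \in W}\util{i}{\omega}$, which is a cardinality-constrained knapsack: choose exactly $k$ winners with total cost at most $B$. I would solve this by a dynamic program whose state records (number of winners chosen, total cost incurred) and whose value is the best attainable utility sum. Because the costs are non-negative integers and any useful state has total cost at most $B$, the table has $O(|\Omega|\cdot B)$ entries across all cardinalities and is filled in $O(|\Omega|^2 B)$ time; reading off the best value for each $k$ and dividing by $k$ yields the best utility achievable at level $V$.

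Finally I would take the maximum of these utilities over all $V$ and compare it with $U^i(\strtprofile)$: a strictly larger value certifies a beneficial deviation, and by the minimal-cost argument no feasible deviation is overlooked, so the decision is correct. The total running time is $O(|N|\sqrt{B})\cdot O(|\Omega|^2 B)$, polynomial once $B$ is polynomial, giving $\DEVIATE_B \in \mathrm{P}$ (and, by the same agent-by-agent Turing reduction used for $\QVNE$, $\QVNE_B \in \mathrm{P}$ as well). The main obstacle is the constrained set-selection at each level: unlike the unconstrained $\DEVIATE$, the hard budget couples the outcomes, so a purely greedy choice of the highest-utility winners fails when those outcomes are also the most expensive to promote. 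The resolution is that a polynomial $B$ lets the knapsack be solved exactly by the cost-indexed program, while the explicit cardinality bookkeeping is forced on us by the $1/|W|$ averaging in the fixed-budget utility.
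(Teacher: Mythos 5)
Your proposal is correct and takes essentially the same route as the paper's proof: enumerate the winning level $V$ over a polynomially bounded integer range, reduce the choice of $W$ at each level to a cardinality-and-budget-constrained selection with per-outcome ``take''/``leave'' costs, and solve it with a dynamic program indexed by (outcomes considered, winners chosen, budget spent), dividing by $|W|$ at the end. The only differences are cosmetic: the paper tightens the range of $V$ to $[\lambda-\sqrt{B},\lambda+\sqrt{B}]$ around the current maximum $\lambda=\max_\omega \sumvotes{i}{\omega}$, giving $\O(|\Omega|^2 B\sqrt{B})$ rather than your extra factor of $|N|$, and uses a $V$ versus $V+1$ indexing convention for the winners' total.
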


\begin{proof}[Proof sketch]
	Agent $i$ can not modify the number of votes for the winning outcome with more than $\O(\sqrt{B})$ votes, due to the budget constraint, and we can therefore iterate through the possible values of $V$.
	Define $dp[n][p][b]$ to be the maximum utility sum agent $i$ can obtain by partitioning the first $n$ outcomes into 
	$p$ possible winning and $n-p$ losing, all by using a budget of at most $b$. The recurrence can be computed in $\O(1)$ and the final algorithm runs in $\O(|\Omega|^2 B \sqrt{B})$. The complete proof is in \cref{appendix:dev_b}.
\end{proof}

Finally, we show that $\QVNE_B$ is  polynomial-time decidable, just like its no-budget counterpart, $\QVNE$.

\begin{theorem}
	$\QVNE_B \in \mathrm{P}$.
\end{theorem}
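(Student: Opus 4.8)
The plan is to reduce $\QVNE_B$ to the already-solved $\DEVIATE_B$ problem, exactly mirroring the structure of the proof of \cref{th:qvne} in the no-budget setting. The key observation is that a strategy profile $\strtprofile$ is a pure NE if and only if \emph{no} agent has a beneficial deviation; that is, $\langle B, N, \Omega, u, \strtprofile \rangle \in \QVNE_B$ precisely when $\langle B, N, \Omega, u, \strtprofile, i \rangle \notin \DEVIATE_B$ for every agent $i \in N$.

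Given an input instance $I = \langle B, N, \Omega, u, \strtprofile \rangle$, I would run the polynomial-time decider for $\DEVIATE_B$ (guaranteed by \cref{th:dev_b}) once for each agent $i \in N$, on the augmented instance $\langle I, i \rangle$. If any of these $|N|$ calls reports that agent $i$ has a beneficial deviation, then $\strtprofile$ is not a NE and we reject; otherwise, since every agent is already playing a best response, we accept. This establishes a Turing reduction from $\QVNE_B$ to $\DEVIATE_B$ using exactly $|N|$ oracle calls, which is polynomial in the input size.

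Finally, I would combine the pieces to conclude membership in $\mathrm{P}$: each of the $|N|$ calls runs in time $\O(|\Omega|^2 B \sqrt{B})$ by \cref{th:dev_b}, and since $B$ is assumed to be at most polynomial in the population size, the total running time is polynomial in $|N|$, $|\Omega|$, and $B$. There is really no substantive obstacle here—the entire content lies in \cref{th:dev_b}, whose proof sketch (the dynamic program $dp[n][p][b]$ together with iterating over the $\O(\sqrt{B})$ feasible values of the winning vote count $V$) does all the heavy lifting. The only point requiring any care is the standard correctness argument that a profile is a NE exactly when no single agent can unilaterally improve, which is immediate from the definition of a pure Nash equilibrium given in the preliminaries.
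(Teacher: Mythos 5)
Your proposal is correct and matches the paper's argument exactly: the paper proves this theorem by noting it is analogous to \cref{th:qvne}, i.e., a Turing reduction making one call to the $\DEVIATE_B$ decider per agent and rejecting iff some agent has a beneficial deviation. Nothing is missing.
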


\begin{proof}
	Analogous to \cref{th:qvne}, using that $\DEVIATE_B \in$ P (\cref{th:dev_b}).
\end{proof}

\section{Conclusions and Future Work}

Filling a gap in the topic of social decision-making, our contribution addresses different aspects of QV, from game-theoretic to algorithmic. We present the first mathematical model for fixed-budget multiple-issue QV, inspired by the no-budget multiple-issue~\cite{multiplealternative} and binary fixed-budget \cite{quad_election_law} variations. Then, we examine both fixed-budget and no-budget QV against established voting mechanisms and criteria, discuss advantages and defects of QV and show two possible collusion strategies. Afterwards, we adopted a computational perspective and showed that determining if a given strategy represents a pure NE is polynomial-time decidable for the two multiple-issue QV alternatives. We hope that our paper provides necessary theoretical groundwork and encourages future research of QV in practical settings. 

One such promising application is resource allocation \cite{res_alloc_1, res_alloc_2, res_alloc_3, Gautier_Lacerda_Hawes_Wooldridge_2023}. Multi-agent systems often have \textit{constraints}, such as the daily energy available to a team of Mars rovers or the total time available on the CPU. Using fixed-budget multiple-issue QV for this setting would provide an alternative to classic decision-making mechanism, one which can leverage the theoretical properties of QV we presented.

\bibliography{bibliography}

\begin{thebibliography}{26}
\providecommand{\natexlab}[1]{#1}

\bibitem[{Bassetti et~al.(2023)Bassetti, Dias, Chen, Mortoni, and Das}]{bassetti2023civicbase}
Bassetti, M.~E.; Dias, G.; Chen, D.~L.; Mortoni, A.; and Das, R. 2023.
\newblock Civicbase: An open-source platform for deploying Quadratic Voting for Survey Research.
\newblock \emph{AI Magazine}.

\bibitem[{Chalkiadakis, Elkind, and Wooldridge(2022)}]{chalkiadakis2022computational}
Chalkiadakis, G.; Elkind, E.; and Wooldridge, M. 2022.
\newblock \emph{Computational aspects of cooperative game theory}.
\newblock Springer Nature.

\bibitem[{Creech, Pacheco, and Miles(2021)}]{res_alloc_1}
Creech, N.; Pacheco, N.~C.; and Miles, S. 2021.
\newblock Resource allocation in dynamic multiagent systems.
\newblock \emph{Computing Research Repository (CoRR)}, abs/2102.08317.

\bibitem[{Dasgupta and Maskin(2020)}]{iia1}
Dasgupta, P.; and Maskin, E. 2020.
\newblock Strategy-Proofness, Independence of Irrelevant Alternatives, and Majority Rule.
\newblock \emph{American Economic Review: Insights}, 2(4): 459--474.

\bibitem[{Dimitri(2022)}]{blockchain}
Dimitri, N. 2022.
\newblock Quadratic Voting in Blockchain Governance.
\newblock \emph{Information}, 13(6).

\bibitem[{Eguia et~al.(2019)Eguia, Immorlica, Ligett, Weyl, and Xefteris}]{multiplealternative}
Eguia, J.; Immorlica, N.; Ligett, K.; Weyl, E.~G.; and Xefteris, D. 2019.
\newblock Quadratic voting with multiple alternatives.
\newblock \emph{Michigan State University, Department of Economics (Working Papers)}, 2019(1).

\bibitem[{Fatima and Wooldridge(2001)}]{res_alloc_2}
Fatima, S.~S.; and Wooldridge, M. 2001.
\newblock Adaptive Task Resources Allocation in Multi-Agent Systems.
\newblock In \emph{Proceedings of the Fifth International Conference on Autonomous Agents}, 537–544. New York, NY, USA: Association for Computing Machinery.

\bibitem[{Gautier et~al.(2023)Gautier, Lacerda, Hawes, and Wooldridge}]{Gautier_Lacerda_Hawes_Wooldridge_2023}
Gautier, A.; Lacerda, B.; Hawes, N.; and Wooldridge, M. 2023.
\newblock Multi-Unit Auctions for Allocating Chance-Constrained Resources.
\newblock \emph{Proceedings of the AAAI Conference on Artificial Intelligence}, 37(10): 11560--11568.

\bibitem[{Hamlin and Hua(2022)}]{the_case_for_AV}
Hamlin, A.; and Hua, W. 2022.
\newblock The case for approval voting.
\newblock \emph{Constitutional Political Economy}.

\bibitem[{Hermens(1958)}]{majority}
Hermens, F.~A. 1958.
\newblock The "Tyranny of The Majority".
\newblock \emph{Social Research}, 25(1): 37--52.

\bibitem[{Lalley and Weyl(2018)}]{qv2018}
Lalley, S.~P.; and Weyl, E.~G. 2018.
\newblock Quadratic Voting: How Mechanism Design Can Radicalize Democracy.
\newblock \emph{AEA Papers and Proceedings}, 108: 33--37.

\bibitem[{Laurence and Sher(2017)}]{ethical_condierations}
Laurence, B.; and Sher, I. 2017.
\newblock Ethical considerations on quadratic voting.
\newblock \emph{Public Choice}, 172(1): 195--222.

\bibitem[{Levin and Nalebuff(1995)}]{intro_to_vote_counting}
Levin, J.; and Nalebuff, B. 1995.
\newblock An Introduction to Vote-Counting Schemes.
\newblock \emph{Journal of Economic Perspectives}, 9(1): 3--26.

\bibitem[{Merlin(2003)}]{axiomatic_char}
Merlin, V. 2003.
\newblock The axiomatic characterization of majority voting and scoring rules.
\newblock \emph{Mathématiques et sciences humaines}, 87–--109.

\bibitem[{Mill(1856)}]{on_liberty}
Mill, J.~S. 1856.
\newblock \emph{On Liberty}.
\newblock Cambridge Library Collection - Philosophy. Cambridge University Press.

\bibitem[{Park and Rivest(2017)}]{secureqv}
Park, S.; and Rivest, R.~L. 2017.
\newblock Towards Secure Quadratic Voting.
\newblock \emph{Public Choice}, (1-2): 151--175.

\bibitem[{P{\'{e}}rolat et~al.(2017)P{\'{e}}rolat, Leibo, Zambaldi, Beattie, Tuyls, and Graepel}]{res_alloc_3}
P{\'{e}}rolat, J.; Leibo, J.~Z.; Zambaldi, V.~F.; Beattie, C.; Tuyls, K.; and Graepel, T. 2017.
\newblock A multi-agent reinforcement learning model of common-pool resource appropriation.
\newblock \emph{Computing Research Repository (CoRR)}, abs/1707.06600.

\bibitem[{Posner and Stephanopoulos(2017)}]{quad_election_law}
Posner, E.; and Stephanopoulos, N.~O. 2017.
\newblock Quadratic election law.
\newblock \emph{Public Choice}, 172(1): 265--282.

\bibitem[{Posner and Weyl(2013)}]{coportategovernance}
Posner, E.; and Weyl, E.~G. 2013.
\newblock Quadratic Vote Buying as Efficient Corporate Governance.
\newblock \emph{The University of Chicago Law Review}, 81: 251--272.

\bibitem[{Posner and Weyl(2017)}]{publicgood}
Posner, E.; and Weyl, E.~G. 2017.
\newblock Quadratic voting and the public good: introduction.
\newblock \emph{Public Choice}, 172: 1--22.

\bibitem[{Quarfoot et~al.(2017)Quarfoot, von Kohorn, Slavin, Sutherland, Goldstein, and Konar}]{inthewild}
Quarfoot, D.; von Kohorn, D.; Slavin, K.; Sutherland, R.; Goldstein, D.; and Konar, E. 2017.
\newblock Quadratic voting in the wild: real people, real votes.
\newblock \emph{Public Choice}, 172(1): 283--303.

\bibitem[{Small(2010)}]{fav_bet}
Small, A. 2010.
\newblock Geometric construction of voting methods that protect voters' first choices.
\newblock \emph{Computing Research Repository (CoRR)}, abs/1008.4331.

\bibitem[{Smith(2007)}]{range_ICC_AFB}
Smith, W.~D. 2007.
\newblock Range Voting satisfies properties that no rank-order system can.
\newblock \emph{Center for Range Voting, 21 Shore Oaks Drive, Stony Brook NY 11790}.
\newblock (Unpublished manuscript).

\bibitem[{Tideman(1987)}]{cloneproof}
Tideman, T.~N. 1987.
\newblock Independence of Clones as a Criterion for Voting Rules.
\newblock \emph{Social Choice and Welfare}, 4(3): 185--206.

\bibitem[{Venzke(2006)}]{1p1vFB}
Venzke, K. 2006.
\newblock Election methods and criteria.
\newblock (Accessed: 13 May 2023.).

\bibitem[{Weyl(2017)}]{robustness}
Weyl, E.~G. 2017.
\newblock The robustness of quadratic voting.
\newblock \emph{Public Choice, Forthcoming}, 172(1): 75--–107.

\end{thebibliography}







\appendix

\onecolumn
\begin{center}
\textbf{\huge Supplementary Material: Fixed-budget and Multiple-issue Quadratic Voting}
\end{center}

\section{\titleofProp}
\label{appendix:properties}

\properties*
\begin{proof} Each criterion is analysed separately:
\begin{itemize}
    \item \textbf{Majority safe}: Consider $|N| = 3$ agents and outcomes $\Omega = \{ \omega, \phi \}$. If two agents place 1 vote each for $\omega$ and the other agent places $3$ votes on $\phi$, then $\phi$ wins even if the majority prefers $\omega$.
    \item \textbf{Consistency}: If an outcome $\omega$ has the maximal number of votes in all elections part of $\mathscr{E}$ (i.e $s^e_{\omega} \geq s^e_{\psi}, \forall e \in \mathscr{E}, \psi \in \Omega$), then the sum of votes in the union election satisfies $\sum_{e} s^e_{\omega} \geq \sum_{e} s^e_{\psi}, \forall \psi \in \Omega$. Therefore, $\omega$ will win the election formed by summing all the elections in $\mathscr{E}$.
    
    \medskip
      When discussing IIA and clones for rated methods, it is assumed
      that agents evaluate each outcome independently. Otherwise, if
      the outcomes are compared to one another, rated methods become
      more like ranked systems.
    
    \item \textbf{Clone independence}: Consider the cases where the set of possible winning
      outcomes has size 1, as different tie-breaking criteria may be
      applied to all other voting systems. Following the assumption made for approval voting
      \cite{cloneproof} and range voting \cite{range_ICC_AFB}, strategic manipulation is forbidden and hence the votes for other candidates are not influenced by the introduction of the clone. Assume a non-winning clone $\gamma$ of $\omega$ is
      introduced into the ballot. Since $\gamma$ is a clone, it holds that
      $\util{i}{\gamma} = \util{i}{\omega} - \epsilon$, for all agents
      $i \in N$ and a very small constant
      $\epsilon > 0$. Then, when agents vote truthfully,
      $\votes{i}{\gamma}=\votes{i}{\omega}$ and the ordering of the
      outcomes per agent is the same as in the election without
      $c$. It follows that the overall ordering by $\totalvotes{\phi}$
      and the winner does not change ($\gamma$ is a non-winning clone).

    \item \textbf{IIA}: If $\omega$ is preferred over $\phi$, then
      $\totalvotes{\omega} \geq \totalvotes{\phi}$. When $\psi$ is introduced and
      the votes cast for $\omega$ and $\phi$ remain constant, then
      $\totalvotes{\omega}$ and $\totalvotes{\psi}$ are unaffected, due to the
      separate evaluation assumption, and, therefore, $\omega$ has at least
      as many chances of winning as $\phi$.

    \item \textbf{NFB}: Assume the schema defines $\alpha=1$, the set of agents $N = \{1,2\}$, the
      set of outcomes $\Omega=\{\omega,\phi,\psi\}$. If the utilities of agent $1$
      and votes cast by $2$ are as displayed in \cref{tbl:fb_ex}, then
      agent $1$ has no incentive to put enough votes to make his
      favorite, $\psi$ win. To close the gap between $\omega$ and $\psi$, it has
      to pay at least $2(\frac{100+10}{2})^2$, which makes its final
      utility negative. If, instead, $\allvotes{1}=(0,4,0)$, agent $1$
      obtains $U^1(\strtprofile)=900-16=884$. Therefore, $1$ has an
      incentive to rate another candidate ($\phi$) higher than his
      favorite ($\psi$).

    \begin{table}[h]
    \centering
    \caption{Favourite betrayal example. The outcome preferred by agent $1$ has many negative votes from agent $2$, and since $1$ does not afford to cast sufficient votes to support $\psi$, it betrays $\psi$ and votes for $\phi$ instead. }
    \begin{tabular}{c c c c}
        \toprule
         & $\omega$ & $\phi$ & $\psi$ \\
         \midrule
        $u^1$ & 0 & 900 & 910 \\
        $\allvotes{2}$ & 10 & 7 & -100 \\
        $\allvotes{1}$ & 0 & 4 & 0 \\
        \bottomrule
    \end{tabular}
    \label{tbl:fb_ex}
    \end{table}

    For the fixed-budget QV, the same argument can be made by simply limiting the budget to $B = 16$. Agent $1$ does not have enough credits to place on $\psi$, but it can make $\phi$ win.
\end{itemize}
\end{proof}

\subsection{Section \ref{sec:collusion}}

\collusiononed*
\begin{proof}
\label{appendix:collusion_one_direction}

Let $x$ be the output of \cref{alg:proandcon} for the input strategy profile $\strtprofile$. For \cref{eq:totalconstant}: $s^+(x) = s^+(\strtprofile) - D$ and $s^-(x) = 0$, so $s^+(x) - s^-(x) = s^+(\strtprofile) - s^-(\strtprofile)$. This implies $\sum_{i \in \mathscr{C}} x_i = \sum_{i \in \mathscr{C}} \allvotes{i}$. Since the votes for all other outcomes remain unchanged, the result is immediate.

For \cref{eq:indiv_rational} and \eqref{eq:strict}:
The invariant of the for loop is $D \geq 0$.
Therefore, $|x_i| \leq |\allvotes{i}|, \forall i \in \mathscr{C}$ and the inequality is strict for the first positive value since $D_{init} = s^-(\strtprofile) > 0$. Hence, $x_i^2 \leq (\allvotes{i})^2$ for all agents and at least one is better off as part of the coalition.
\end{proof}

\cycle*
\begin{proof}
\label{appendix:cycle}
 If $G(\strtprofile)$ contains one such cycle, it also contains a simple cycle $\mathscr{C}_{\Omega}$ containing a strictly beneficial edge. For each edge $(\omega \rightarrow^i \phi) \in \mathscr{C}_{\Omega}$, agent $i$ increases the number of votes for $\omega$, $\xvotes{i}{\omega}=\votes{i}{\omega}+1$, and decreases the support for $\phi$, $\xvotes{i}{\phi}=\votes{i}{\phi} - 1$. The coalition $\mathscr{C}$ is the set of agents annotated on the edges of the cycle, that is $\mathscr{C}=\{i \mid \exists \omega, \phi : (\omega \rightarrow^i \phi) \in \mathscr{C}_{\Omega}\}$. 
 
 Consider, for instance, \cref{fig:coll_before} for which the induced graph contains the simple cycle $\mathscr{C}_{\Omega} = \{\psi \rightarrow^A \omega, \omega \rightarrow^B \phi, \phi \rightarrow^C \psi\}$. Then, all agents are involved in the coalition and, by applying the strategy above for this cycle, their new strategies $x$ are depicted in \cref{fig:coll_after}.
 
 \begin{table}[t]
 \centering
      \caption{Example of the generalised strategy. The black arrows correspond to the vote transfers, while the red arrows correspond to the edges in the cycle $\mathscr{C}_{\Omega}$.}
 \label{fig:gen_cycle}
 \begin{subtable}[b]{0.3\textwidth}
 \centering
 \begin{tabular}{c c c c}
        \toprule
        Agents & \multicolumn{3}{c}{$\strtprofile$} \\
        \cmidrule(l){2-4}
         & \multicolumn{1}{c}{$\omega$} & \multicolumn{1}{c}{$\phi$} & \multicolumn{1}{c}{$\psi$}\\
        \midrule
        $A$ & $1$ & $0$ & $3$ \\
        $B$ & $3$ & $1$ & $0$ \\
        $C$ & $0$ & $6$ & $2$ \\
        \bottomrule
      \end{tabular}
      \caption{Before colluding}
      \label{fig:coll_before}
    \end{subtable}
      \quad
    \begin{subtable}[b]{0.3\textwidth}
    \centering
        \begin{tabular}{c c c c}
        \toprule
         Agents & \multicolumn{3}{c}{$x$} \\
            \cmidrule{2-4}
         & $\omega$ & $\phi$ & $\psi$\\
        \midrule
        $A$ & \tikzmarknode{a2}{$2$} & $0$ & \tikzmarknode{c1}{$2$} \\
        $B$ & \tikzmarknode{a1}{$2$} & \tikzmarknode{b2}{$2$} & $0$ \\
        $C$ & $0$ & \tikzmarknode{b1}{$5$} & \tikzmarknode{c2}{$3$} \\
        \bottomrule
      \end{tabular}  
      \begin{tikzpicture}[overlay, remember picture, shorten >=.5pt, shorten <=.5pt]
        \draw [->] (a1.west) [bend left] to (a2.west);    
        \draw [->] (b1.east) [bend right] to (b2.east); 
        \draw [->] (c1.east) [bend left] to (c2.east);

        \draw [color=red,->] (a2.north east) [bend left] to (c1.north west);
        \draw [color=red,->] (b2.south west) [bend left] to (a1.south east);
        \draw [color=red,->] (c2.south west) [bend left] to (b1.south east);
    \end{tikzpicture}
    
      \caption{After colluding}
      \label{fig:coll_after}
      \end{subtable}
 \end{table}

We will now show that our construction meets the three requirements: unchanged ballot-count, individual rationality, and strict benefit for one agent, namely \cref{eq:totalconstant}, \eqref{eq:indiv_rational} and \eqref{eq:strict}. Consider agent $i \in \mathscr{C}$, $\omega$ and $\phi$ such that $(\omega \rightarrow^i \phi) \in \mathscr{C}_{\Omega}$. From the definition of the graph, $\votes{i}{\omega} < \votes{i}{\phi}$. Because $\mathscr{C}_{\Omega}$ is a simple cycle, exactly $2$ values differ between $\allvotes{i}$ and $x^i$, at indices $\omega$ and $\phi$. Then, \cref{eq:cycle_indiv_rat} proves that individual rationality is satisfied for all agents part of the coalition.

\begin{equation}
\label{eq:cycle_indiv_rat}
    \begin{split}
       \sum_{\psi \in \Omega} (\xvotes{i}{\psi})^2 &= (\xvotes{i}{\omega})^2 + (\xvotes{i}{\phi})^2 + \sum_{\psi \in \Omega \setminus \{\omega, \phi\}} (\xvotes{i}{\psi})^2 \\
         &=  (\votes{i}{\omega}+1)^2 + (\votes{i}{\phi}-1)^2 + \sum_{\psi \in \Omega \setminus \{\omega, \phi\}} (\votes{i}{\psi})^2 \\
         &= 2(\votes{i}{\omega}-\votes{i}{\phi}+1)+ \sum_{\psi \in \Omega} (\votes{i}{\psi})^2 \\
         &\leq \sum_{\psi \in \Omega} (\votes{i}{\psi})^2.
    \end{split}
\end{equation}
The inequality is strict for the strictly beneficial edge, hence meeting \cref{eq:strict}.

The total number of votes per outcome $\totalvotes{\omega}$ is not affected for $\omega \notin \mathscr{C}_{\Omega}$. Consider outcome $\omega \in \mathscr{C}_{\Omega}$ involved in the vote transfer; as it is part of a directed simple cycle, it has exactly one incoming edge $\rightarrow^{i_\omega}$ and one outgoing edge $\rightarrow^{j_\omega}$, so $\totalvotes{\omega}$ remains unchanged after applying the collusion strategy:
\begin{equation}
\begin{split}
\totalvotes{\omega} = \sum_{i \in N} \votes{i}{\omega} &= \votes{i_\omega}{\omega} + \votes{j_\omega}{\omega} + \sum_{i \in N \setminus \{i_\omega,j_\omega\}} \votes{i}{\omega} \\
            &= (\xvotes{i_\omega}{\omega}+1) + (\xvotes{j_\omega}{\omega}-1) + \sum_{i \in N \setminus \{i_\omega,j_\omega\}} \xvotes{i}{\omega} %
            = \sum_{i \in N} \xvotes{i}{\omega}. \\
\end{split}
\end{equation}

Thus, all three requirements for a successful coalition are fulfilled.
\end{proof}

\section{\titleofCom}

\subsection{Section \ref{sec:complexity_no_budget}}
\deviate*
\begin{proof}
\label{appendix:deviate}
We show the result by constructing a set of possible strategies $\PBR$, with size polynomial in $|N|$ and $|\Omega|$, that is guaranteed to contain a best response strategy. For simplicity, assume that the outcomes are numbered from $1$ to $|\Omega|$. Define $W=\argmax_\omega \totalvotes{\omega}$ to be the set of possible winners after $i$'s ballot is included. Let $\sumvotes{i}{\omega} = \sum_{j \neq i} \votes{j}{\omega}$ be the total number of votes option $\omega$ has from all other agents. 

Assume agent $i$ wants to influence the votes such that each possible winner has $V + 1$ votes in total and decrease the votes of the other outcomes with $\sumvotes{i}{\omega} \geq V + 1$ to $V$. Notice that agent $i$ might want to decrease the number of votes for some outcomes it wants in $W$, in order to pay less overall. Take for example, a situation in which $\Omega$ has $2$ outcomes, with $10$ and $4$ votes from the other agents respectively, and the agent wants both outcomes to win. In this case, it should bring both to $7$ votes in order to pay as little as possible. Then, if $v^{*,i}$ is $i$'s goal strategy, it satisfies:
$$
\totalvotes{\omega} = \sumvotes{i}{\omega} + \optimvotes{i}{\omega} =
\begin{cases}
V + 1, & \text{if $\omega \in W$} \\
\max(V, \sumvotes{i}{\omega}), & \text{otherwise.}
\end{cases}
$$

Hence, the strategy of agent $i$ is determined by $V$ and $W$,
$$
\optimvotes{i}{\omega}=
\begin{cases}
V + 1 - \sumvotes{i}{\omega}, & \text{if $\omega \in W$} \\
V - \sumvotes{i}{\omega}, & \text{if $\omega \not\in W$ and $\sumvotes{i}{\omega} > V$ } \\
0, & \text{otherwise.} \\
\end{cases}
$$

Partition $W$ into two sets $\Wneg = \{\omega \in W \mid \optimvotes{i}{\omega} < 0\}$ and $W^+ = \{\omega \in W \mid \optimvotes{i}{\omega} \geq 0\}$. Assume without loss of generality that the outcomes are ordered such that $s^{-i}$ is in descending order and let $m$ be the number of outcomes with strictly more votes than $V$ (excluding the votes from agent $i$). Therefore, $\sumvotes{i}{m} > V \geq \sumvotes{i}{m+1}$, where we can consider $\sumvotes{i}{0} = \infty$ and $\sumvotes{i}{|\Omega|+1} = -\infty$, and $\{1,\dots,m\} \cap W = \Wneg$. \Cref{fig:example_strategy} illustrates an example strategy.


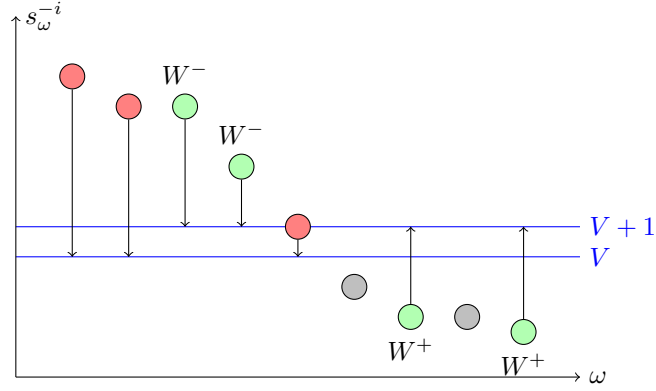
\begin{figure}[t]
\centering
\def\lim{3}
\begin{tikzpicture}[yscale=0.4, xscale=0.75]
  \tikzstyle{nodest}=[circle,draw,fill=red!50]
  \draw[-] [color=blue] (0,\lim) -- (10,\lim) node[right] {$V$};
  \draw[-] [color=blue] (0,\lim+1) -- (10,\lim+1) node[right] {$V+1$};
  \draw[->] (0,-1) -- (10,-1) node[right] {$\omega$};
  \draw[->] (0,-1) -- (0,11) node[right] {$\sumvotes{i}{\omega}$};

  \node[style=nodest] (1) at (1,9) {};
  \node[style=nodest] (2) at (2,8)  {};
  \node[style=nodest, fill=green!30] (3) at (3,8) [label=above:$\Wneg$]  {};
  \node[style=nodest, fill=green!30] (4) at (4,6) [label=above:$\Wneg$] {};
  \node[style=nodest] (5) at (5,4)  {};
  \node[style=nodest, fill=gray!50] (6) at (6,2)  {};
  \node[style=nodest, fill=green!30] (7) at (7,1) [label=below:$\Wpos$] {};
  \node[style=nodest, fill=gray!50] (8) at (8,1)  {};
  \node[style=nodest, fill=green!30] (9) at (9,0.5) [label=below:$\Wpos$] {};

  \foreach \x in {1,2,5} \draw[->] (\x) -- (\x,\lim); 
  \foreach \x in {3,4,7,9} \draw[->] (\x) -- (\x,\lim+1); 
\end{tikzpicture}
\caption{Strategy based on $V$ and $W = \Wneg \cup \Wpos$. The green nodes represent outcomes that become the possible winners after agent $i$ casts their votes. The red nodes are the outcomes that receive negative support from agent $i$, in order to be excluded from winning the election. The agent does not cast any votes for the gray outcomes.}

\label{fig:example_strategy}
\end{figure}

Substituting in \cref{eq:indiv_util}, where we ignored the refund term as mentioned:
\begin{equation}
\label{eq:W_util}
    U^i(V, W, u) = \frac{1}{|W|}\sum_{\omega \in W} \util{i}{\omega} - \alpha\bigl[\sum_{\omega \in W}(\sumvotes{i}{\omega} - (V + 1))^2 + \sum_{\omega \in \{1,\dots,m\}\setminus W} (\sumvotes{i}{\omega} - V)^2 \bigr]
\end{equation}

\newcommand{\voptim}[0]{V^*(W)}
\newcommand{\vmoptim}[0]{V^*_m(W)}
For a more concise derivation, we temporary extend the domain of $V$ to $\R$ instead of $\Z$\footnote{We will restrict it back towards the end of the proof. \label{V_domain}}. Let $\vmoptim$ be a value of $V$ that minimises the payment incurred by agent $i$ such that the final set of winners is $W$. To simplify the notation, we will just use $\voptim$, but it is a function of $m$. Now, let us find the $\voptim$ that maximises $i$'s utility.

\begin{align*}
    \frac{\partial U^i}{\partial V} &= -\alpha \bigl( 2|W|V - 2\bigl(\sum_{\omega \in W} (\sumvotes{i}{\omega} - 1)\bigr) + 2V(m-|\Wneg|) - 2\sum_{\omega \in \{1,\dots, m\}\setminus W} \sumvotes{i}{\omega} \bigr) \\
    &= -2\alpha \Bigl( (m+|\Wpos|)V - \bigl(\sum_{\omega=1}^{m} \sumvotes{i}{\omega} + \sum_{\omega \in \Wpos} \sumvotes{i}{\omega}\bigr) - |W| \Bigr) 
\end{align*}

It holds that $\frac{\partial U^i}{\partial V}(\voptim) = 0$ so

\begin{equation}
\begin{split}
\vmoptim &= \frac{\sum_{\omega \in W} (\sumvotes{i}{\omega} - 1) + \sum_{\omega \in \{1,\dots, m\}\setminus W} \sumvotes{i}{\omega}}{m + |\Wpos|}  \\
 &= \frac{(\sum_{\omega = 1}^m \sumvotes{i}{\omega}) + (\sum_{\omega \in \Wpos}\sumvotes{i}{\omega}) - |W|}{m + |\Wpos|}
\label{eq:v_optim}
\end{split}
\end{equation}
The equation above shows how to obtain the best $V$ from an arbitrary $W$. One can notice that $\voptim$ it not symmetric in $\Wneg$ and $\Wpos$. The contents of $\Wneg$ do not influence $\voptim$, whereas $\Wpos$ plays a crucial role in determining the value.

\newcommand{\wnegoptim}[0]{W^{*,-}(\sizewneg, \sizewpos)}
\newcommand{\wposoptim}[0]{W^{*,+}(\sizewneg, \sizewpos)}
\newcommand{\wjposoptim}[1]{W^{*,+}_{#1}(\sizewneg, \sizewpos)}
\newcommand{\wmoptim}[0]{W^*_m(\sizewneg, \sizewpos)}
\newcommand{\wmnegoptim}[0]{W^{*,-}_m(\sizewneg, \sizewpos)}
\newcommand{\wjmposoptim}[1]{W^{*,+}_{m,{#1}}(\sizewneg, \sizewpos)}

We now turn our attention to finding the optimal set of winners $W$. Let $\woptim$ be the set of final winners $W$ with size $\sizewpos + \sizewneg$ which yields the maximal $U^i$ when agent $i$ chooses strategy $a^{*,i}(\voptim, W, u)$, where $|\Wneg| = \sizewneg$ and $|\Wpos| = \sizewpos$. Let $\wnegoptim$ and $\wposoptim$ partition $\woptim$ in the same way $\Wneg$ and $\Wpos$ partition $W$. 

Because the optimal set of winners $\woptim$ depends from definition on the number of votes $\voptim$ that gives $i$ the highest utility, it may seem that $\wnegoptim$ and $\wposoptim$ are correlated. However, due to the non-symmetric nature of \cref{eq:v_optim}, we show that they are not.

\begin{lemma}
    $\wnegoptim$ is independent from $\wposoptim$ (i.e. only depends on $\sizewneg$ and $\sizewpos$).
\end{lemma}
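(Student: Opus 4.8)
The plan is to hold the parameters $m$, $\sizewneg=|\Wneg|$ and $\sizewpos=|\Wpos|$ fixed (so that $|W|=\sizewneg+\sizewpos$ is a constant), pick an \emph{arbitrary} positive part $\Wpos$, and show that the utility-maximising negative part $\wnegoptim$ is the same set regardless of which $\Wpos$ was chosen. The structural fact that makes this possible is already exposed in \cref{eq:v_optim}: the optimal vote count $\vmoptim$ depends on $\Wpos$, $m$ and the two sizes, but \emph{not} on the contents of $\Wneg$. Hence, once $\Wpos$ is fixed, $V=\vmoptim$ is a determined constant, and we are free to optimise $\Wneg\subseteq\{1,\dots,m\}$ of size $\sizewneg$ against that single value of $V$.

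First I would substitute $V=\vmoptim$ into \cref{eq:W_util} and isolate exactly the terms that depend on the choice of $\Wneg$. Recall that $\{1,\dots,m\}\cap W=\Wneg$, so every outcome of $\{1,\dots,m\}$ is either placed in $\Wneg$ (a winner, contributing payment $(\sumvotes{i}{\omega}-(V+1))^2$) or left out (a loser, contributing $(\sumvotes{i}{\omega}-V)^2$), while $\Wpos$ is drawn entirely from $\{m+1,\dots,|\Omega|\}$. Consequently the $\Wneg$-dependent part of $U^i$ separates into $\sum_{\omega\in\Wneg} g(\omega)$, where
$$g(\omega)=\frac{\util{i}{\omega}}{|W|}-\alpha\bigl[(\sumvotes{i}{\omega}-(V+1))^2-(\sumvotes{i}{\omega}-V)^2\bigr].$$

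The key step is the algebraic identity $(\sumvotes{i}{\omega}-(V+1))^2-(\sumvotes{i}{\omega}-V)^2=-2(\sumvotes{i}{\omega}-V)+1$, which rewrites the per-outcome gain as $g(\omega)=\util{i}{\omega}/|W|+2\alpha\sumvotes{i}{\omega}-2\alpha V-\alpha$. The whole $V$-dependence, namely $-2\alpha V-\alpha$, is a single additive constant shared by every candidate outcome, so it cancels in any comparison between two subsets of the same size $\sizewneg$. Therefore maximising $\sum_{\omega\in\Wneg} g(\omega)$ amounts to selecting the $\sizewneg$ outcomes of $\{1,\dots,m\}$ with the largest values of $\util{i}{\omega}/|W|+2\alpha\sumvotes{i}{\omega}$ (ties broken by a fixed rule). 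This ranking key mentions neither $V$ nor the identity of the outcomes in $\Wpos$; it depends only on $m$ and on $|W|=\sizewneg+\sizewpos$. Hence $\wnegoptim$ depends only on $\sizewneg$ and $\sizewpos$ (for the fixed $m$ under consideration), which is precisely the claimed independence, and the optimal set is obtained by a single sort, consistent with the cost quoted in the proof sketch.

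The main obstacle is purely in the bookkeeping that justifies the cancellation: one must state carefully (i) that $\vmoptim$ is independent of the contents of $\Wneg$ — the non-symmetry of \cref{eq:v_optim} — so that one common value of $V$ governs the comparison of \emph{all} candidate $\Wneg$ sets, and (ii) that the payment of each outcome in $\{1,\dots,m\}$ is determined solely by whether it lies in $\Wneg$, not by the rest of $W$. Once these two points are pinned down, the disappearance of $V$ from the $\mathrm{argmax}$ is immediate and the independence of $\wnegoptim$ from $\wposoptim$ follows at once.
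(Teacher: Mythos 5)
Your proposal is correct and follows essentially the same route as the paper: both substitute $V=\vmoptim$ (noting via \cref{eq:v_optim} that it is unaffected by the contents of $\Wneg$), expand the payment difference $(\sumvotes{i}{\omega}-(V+1))^2-(\sumvotes{i}{\omega}-V)^2$ to isolate a per-outcome score, and observe that the residual $V$-dependent part is a common additive constant that cancels for subsets of fixed size $\sizewneg$, leaving the ranking key $\util{i}{\omega}/\sizew + 2\alpha\sumvotes{i}{\omega}$. Your bookkeeping of the $-2\alpha V-\alpha$ constant is, if anything, slightly more explicit than the paper's.
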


\begin{proof} From \cref{eq:W_util}, if $\mathcal{C}$ represents a term that does not depend on $\Wneg$,
\small
\begin{equation*}
    \begin{split}
    U^i(\voptim, W, u) &= \frac{1}{\sizew}\sum_{\omega \in W} \util{i}{\omega} - \alpha\bigl[\sum_{\omega \in W}(\sumvotes{i}{\omega} - (\voptim + 1))^2 + \sum_{\omega \in \{1,\dots, m\}\setminus W} (\sumvotes{i}{\omega} - \voptim)^2 \bigr] \\
    &= \mathcal{C} + \bigl(\sum_{\omega \in \Wneg} \frac{\util{i}{\omega}}{\sizew} \bigr) -%
        \alpha\bigl[\sum_{\omega \in \{1,\dots,m\}} (\sumvotes{i}{\omega} - \voptim)^2 -%
        \sum_{\omega \in \Wneg}\bigl(2 (\sumvotes{i}{\omega}-\voptim)+1 \bigr)  \bigr] \\
     &= \mathcal{C} + \sum_{\omega \in \Wneg}\frac{\util{i}{\omega}}{\sizew} + \alpha \bigl( 2(\sumvotes{i}{\omega}-\voptim)+1 \bigr) \\
      &= \mathcal{C} + \sum_{\omega \in \Wneg} \underbrace{\frac{\util{i}{\omega}}{\sizew} + 2\alpha(\sumvotes{i}{\omega})}_{f(\omega)},
\end{split}
\end{equation*}
\normalsize
Notice that we were able to take $\voptim$ out as it does not depend on the contents of $\Wneg$.

Hence, $\wnegoptim$ contains the $x$ outcomes that maximize $f(\omega) = \frac{\util{i}{\omega}}{\sizew} + 2\alpha(\sumvotes{i}{\omega})$, for $\omega \in \{1,\dots,m\}$\footnote{Can be computed in $\O(w \log w)$ time by sorting $f(\omega)$ using Quicksort or any other efficient sorting algorithm.\label{sorttime}}.
\end{proof}

To compute $\woptim$, we also need to determine $\wposoptim$. Let us rearrange \cref{eq:W_util}.
\begin{equation}
     \begin{split}
     U^i(V, W, u)  &= \mathcal{T} + \sum_{\omega \in \Wpos} \underbrace{ \Bigl( \frac{\util{i}{\omega}}{\sizewneg+\sizewpos} - \alpha(V + 1 - \sumvotes{i}{\omega})^2 \Bigr)}_{g_\omega(V)} \\
     \end{split}
\end{equation}

Assume for a moment that $\voptim$ is fixed. Then, $\wposoptim$ contains the $\sizewpos$ candidates that satisfy $\sumvotes{i}{\omega} \leq \voptim$ and that have the highest value of outcome utility minus payment $g_\omega(V) = \frac{\util{i}{\omega}}{\sizewneg + \sizewpos} - \alpha(V + 1 - \sumvotes{i}{\omega})^2$, at $\voptim$. Due to the quadratic nature of the function, the plots of $g_{m+1}(V), g_{m+2}(V), \dots, g_{|\Omega|}(V)$ intersect at most $\O(\Omega^2)$ times, hence describing at most $\O(|\Omega|^2)$ possible orderings of $\{g_\omega(V)\}_{\omega=m+1}^{|\Omega|}$. The set of intersections is $G_I$ \eqref{eq:intersections} is
\begin{equation}
\begin{split}
    G_I &= \{V \mid \exists \omega, l \in \Omega \text{ such that } g_\omega(V) = g_\phi(V) \text{ and } \sumvotes{i}{\omega}, \sumvotes{i}{\phi} \leq V\} \cup \{-\infty, \infty\} \\
    &= \{V_1, V_2, \dots, V_{|G_I|}\}
\end{split}
\label{eq:intersections}
\end{equation}

Every interval $[V_j, V_{j+1}]$, where $V_j$ and $V_{j+1} \in G_I$, determines a particular order of the $g_\omega$ functions, as seen in \cref{fig:g_intersect}.

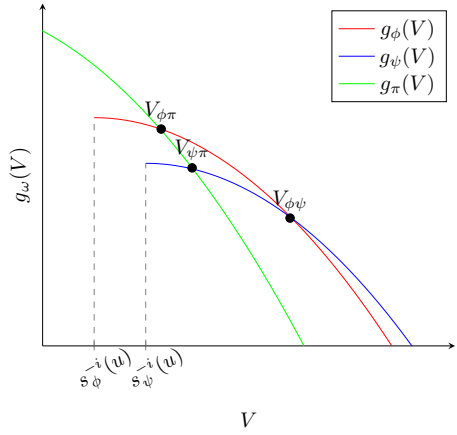
\begin{figure}[h!]
    \centering
    \begin{tikzpicture}[scale=0.8]
    \begin{axis}[
        axis lines = left,
        ytick=\empty,
        xlabel = \(V\),
        ylabel = {\(g_{\omega}(V)\)},
        xtick={5,6},
        xticklabels={$s^{-i}_{\phi}(u)$, $s^{-i}_{\psi}(u)$},
        x tick label style={rotate=30,anchor=north},
        ymax=100, xmin=4, ymin=-50, xmax=12
    ]
        \addplot [
            domain=5:20, 
            samples=100, 
            color=red,
        ] {50-3*(x-5)^2};    
        \addlegendentry{\(g_{\phi}(V)\)};
        
        \addplot [
            domain=6:20, 
            samples=100, 
            color=blue,
        ] {30-3*(x-6)^2};
        \addlegendentry{\(g_{\psi}(V)\)};
        
        \addplot [
            domain=3:20, 
            samples=100, 
            color=green,
        ] {100-3*(x-2)^2};
        \addlegendentry{\(g_{\pi}(V)\)};
    
        \draw[draw=gray, dashed, thin] (axis cs:6,-50) -- (axis cs:6,30);
        \draw[draw=gray, dashed, thin] (axis cs:5,-50) -- (axis cs:5,50);
        
        \addplot[mark=*] coordinates {(6.9,28)} node[above] {$V_{\psi\pi}$};
        \addplot[mark=*] coordinates {(6.3,45)} node[above] {$V_{\phi\pi}$};
        \addplot[mark=*] coordinates {(8.8,6)} node[above] {$V_{\phi\psi}$};
    \end{axis}
    \end{tikzpicture}
    
    \caption{The functions $g_\omega$ and the intersections described by $V \in G_I$. If $V \in [-\infty, V_{\phi\pi}]$, then $g_\pi\geq g_\phi \geq g_\psi$, however, if $V \in [V_{\phi\pi}, V_{\psi\pi}]$ then $g_\phi \geq g_\pi \geq g_\psi$, and so on. In this case, $G_I = \{-\infty, V_{\phi\pi}, V_{\psi\pi}, V_{\phi\psi}, \infty\}.$}
    \label{fig:g_intersect}

\end{figure}
 
Now, let $\voptim$ be variable again. Besides $\sumvotes{i}{m} > V \geq \sumvotes{i}{m+1}$, assume that $V_j \leq \voptim \leq V_{j+1}$ holds as well. With this extra assumption, it is very easy to obtain $\wjposoptim{j}$: pick, in linear time, the $|\Wpos| = x$ outcomes with the highest value of $g_\omega$ inside $[V_j, V_{j+1}]$. Since the intervals partition the whole $\R$ space for $V$, it must hold that $\wposoptim \in \{\wjposoptim{1}$, $\wjposoptim{2},\dots,\wjposoptim{|G_I|-1}\}$.

\newcommand{\vzoptim}[0]{V^{*,\Z}_m(V_j, V_{j+1},W)}
To restrict the domain of $\vmoptim$ back to $\Z$ (see \cref{V_domain}), we use that $U^i$ is concave in $V$ and, therefore, one of $ \lfloor \voptim \rfloor, \lceil \voptim \rceil$ is the integer maximizer. We must also account for the interval restrictions $\sumvotes{i}{m} > V \geq \sumvotes{i}{m + 1}$ and $V_j \leq V \leq V_{j+1}$; define $\vzoptim$ as such:
$$\vzoptim = \Bigl\{ V_{min}, \lfloor \vmoptim \rfloor, \lceil \vmoptim \rceil, V_{max} \Bigr\} \cap \Bigl[ V_{min},V_{max} \Bigr]$$

where $V_{min}=\max(\sumvotes{i}{m + 1}, V_j)$ and $V_{max}=\min (\sumvotes{i}{m}-1,V_{j+1})$.

Let $\mathscr{S}(m,\sizewneg, \sizewpos,j)$ to be the set of strategies that bring all outcomes in $W$ to $V + 1$ votes, $\Omega \setminus W$ to at most $V$ votes, where $W = \wmnegoptim \cup \wjmposoptim{j}$ and $V \in \vzoptim$. Then, define $\PBR$ to be the set of all such strategies across all $m,\sizewneg,\sizewpos,j$:

$$\PBR = \bigcup_{\sizewneg=1}^{m} \bigcup_{m=1}^{|\Omega|} \bigcup_{\sizewpos=1}^{|\Omega|-m} \bigcup_{j=0}^{|G_I(\sizewneg,\sizewpos)|} \mathscr{S}(m,\sizewneg,\sizewpos,j).$$

\Cref{alg:NE} iterates through $\PBR$ verifies if agent $i$ has a beneficial deviation by using any of the strategies in $\PBR$.

\begin{algorithm}
\caption{PTIME algorithm for $\DEVIATE$}
\begin{algorithmic}
\For{$\sizewneg \in \{0,\dots,|\Omega|\}$}
\For{$\sizewpos \in \{0,\dots,|\Omega|-\sizewneg\}$}
\State $G_I \gets \Call{intersectionSet}{\sizewneg,\sizewpos}$
\For {$m \in \{\sizewneg,\dots,|\Omega|-\sizewpos\}$}
\State $\Wneg \gets \wmnegoptim$
\For{$j \in \{1,\dots,|G_I|-1\}$}
\State $W \gets \Wneg \cup \wjmposoptim{j}$
\For{$V \in \vzoptim$}
\State $\sigma \gets \Call{getNewStrategy}{V, W}$
\If{$U^i(\allvotes{-i}, \sigma) > U^i(\allvotes{-i}, \allvotes{i})$} \Return Yes \EndIf
\EndFor \EndFor
\EndFor \EndFor
\EndFor
\Return No
\end{algorithmic}
\label{alg:NE}
\end{algorithm}

\completeness*
\begin{proof}[Proof]
\label{appendix:completeness}
Assume that agent $i$ has a beneficial deviation. It means that $\exists \sigma \in \Sigma$ such that $U^i(\allvotes{-i}, \sigma) > U^i(\allvotes{-i}, \allvotes{i})$. The strategy $\sigma$ corresponds to parameters $ w_{\sigma}, x_{\sigma}, m_{\sigma}, j_{\sigma}$. By construction, there is at least one strategy $\sigma^*$ in $\mathscr{S}( w_{\sigma}, x_{\sigma},m_{\sigma}, j_{\sigma})$ which returns the highest utility out of all possible strategies with the same parameters. Therefore, $U^i(\allvotes{-i}, \sigma^*) \geq U^i(\allvotes{-i}, \sigma) > U^i(\allvotes{-i}, \allvotes{i})$, where $\sigma^* \in \PBR$ as $\mathscr{S}(w_{\sigma}, x_{\sigma},m_{\sigma}, j_{\sigma}) \subseteq PBR$.
\end{proof}

From \cref{lemma:completeness} and \cref{lemma:soundness}, we deduce that \cref{alg:NE} is a decider for $\DEVIATE$. As $|G_I| \in \O(\Omega^2)$, $|\PBR| \in \O(\Omega^5)$, and all functions can be computed in polynomial time of $N$ and $\Omega$, it follows that $\DEVIATE \in$ P.
\end{proof}

\subsection{Section \ref{sec:complex_fixed_budget}}
\devb*
\begin{proof}
\label{appendix:dev_b}
Let $\lambda = \max_{\omega \in \Omega} (\sumvotes{i}{\omega})$ be the highest number of votes an outcome was cast without counting $i$'s votes. It must hold that $\lambda -\sqrt B \leq V \leq \lambda + \sqrt B$, since agent $i$ has a budget of $B$. Assume that $V$ is fixed for now.

\newcommand{\take}[1]{\text{take}({#1})}
\newcommand{\leave}[1]{\text{leave}({#1})}
We define $\take{\omega}$ to be the price to pay if agent $i$ wants to make outcome $\omega$ part of $W$ and $\leave{\omega}$ to be the price to pay if agent $i$ does not want $\omega$ to be part of $W$. For outcomes that satisfy $\sumvotes{i}{\omega} > V$, agent $i$ can bring them to $V + 1$ votes to make them part of $W$ or to $V$ otherwise. If $\sumvotes{i}{\omega} \leq V$, the agent has to pay only if $\omega \in W$.
\begin{gather}
\take{\omega} = ((V + 1) - \sumvotes{i}{\omega})^2, \forall \omega \in \Omega. \\
\leave{\omega} = 
\begin{cases}
    0, &\text{ if } \sumvotes{i}{\omega} \leq V \\
    (\sumvotes{i}{\omega} - V)^2, &\text{ otherwise.} \\
\end{cases}
\end{gather}

By including an outcome in $W$, the agent gains utility of $\util{i}{\omega}$ (divided by $|W|$, but we will include this later), otherwise none. Define $dp[n][p][b]$ to be the maximum utility sum agent $i$ can obtain by taking $p$ outcomes from the first $n$ by using a budget of at most $b$. Clearly, the definition makes sense for $p \leq n$. Denote $\{1,\dots,n\} \cap W$ by $W_n$.

\begin{equation*}
    dp[n][p][b] = \max_{W_n} \Bigl\{\sum_{\omega \in W_n} \util{i}{\omega} \hphantom{1} \Big| \hphantom{1} |W_n| = p \text{ and }
    \sum_{\omega \in W_n} \take{\omega} + \sum_{\omega \in \{1,\dots,n\} \setminus W_n} \leave{\omega} \leq b \Bigr\}
\end{equation*}

The initialization is $dp[0][0][b] = 0, \forall b \in \{0,\dots,B\}$. At each step, agent $i$ can choose to exclude element $n$ out of $W$ by obtaining utility $dp[n-1][p][b - \leave{n}]$ or to include $n$ into $W$ with $dp[n-1][p-1][b-\take{n}] + \util{i}{n}$, if the cases are well-defined (e.g. budget is not exceeded, etc).
The recurrence for $dp[n][p][b]$ is shown in \cref{eq:fixed_budget_dp0} and \cref{eq:fixed_budget_dp}, for all $n \in \{1,\dots, |\Omega|\}$, $b \in \{0,\dots,B\}$.

For $p = 0$:
\begin{equation}
\label{eq:fixed_budget_dp0}
dp[n][0][b] = 
\begin{cases}
    -\infty, &\text{ if } \leave{n} < b \\
    dp[n-1][0][b - \leave{n}], &\text{ otherwise. }  \\
\end{cases}
\end{equation}

For $p > 0$:
\begin{equation}
\label{eq:fixed_budget_dp}
dp[n][p][b] = 
\begin{cases}
    -\infty, &\text{ if } \leave{n}, \take{n} > b \\
    dp[n-1][p][b - \leave{n}], &\text{ if } \leave{n} \leq b < \take{n} \\
    dp[n-1][p-1][b-\take{n}] + \util{i}{n}, &\text{ if } \take{n} \leq b < \leave{n} \\

    \begin{aligned}
    \max \Bigl( &dp[n-1][p][b - \leave{n}], \\ 
                &dp[n-1][p-1][b-\take{n}] + \util{i}{n} \Bigr),
    \end{aligned} &\text{ otherwise. } 
\end{cases}
\end{equation}

The best response for $|W| = w$ gives agent $i$ utility $U^i(v^{-i}, V, w) = \frac{dp[|\Omega|][w][B]}{w}$. This method will be applied for all \\ $V \in \{\lambda-\sqrt B, \dots, \lambda+\sqrt B\}$ and the  pseudocode is presented in \cref{alg:fixed_budget_P}.

\begin{algorithm}
\caption{PTIME algorithm for $\DEVIATE_B$}
\label{alg:fixed_budget_P}
    \begin{algorithmic}
        \For {$V \in \{\lambda-\sqrt B, \dots, \lambda+\sqrt B\}$}
        \For {$n \in \{1,\dots,\Omega\}$}
        \For {$p \in \{0, \dots, n\}$}
        \For {$b \in \{0,\dots,B\}$}
        \State $\Call{Compute}{dp[n][p][b]}$
        \EndFor \EndFor \EndFor
        \For {$w \in \{1,\dots,\Omega\}$}
        \If {$U^i(\strtprofile) < dp[\Omega][w][B] / w$}
        \Return Yes
        \EndIf
        \EndFor
         \EndFor
        \Return No
    \end{algorithmic}%
\end{algorithm}%
The total time complexity is $\O(\Omega^2 B \sqrt{B})$, so $\DEVIATE_B \in P$.
\end{proof}

\end{document}